\theoremstyle{plain}
\newtheorem{thm}{Theorem}
\newtheorem{prop}{Proposition}
\newtheorem{lem}{Lemma}
\theoremstyle{definition}
\newtheorem{defn}{Definition}
\theoremstyle{remark}
\newtheorem{rem}{Remark}
\newcommand{\Tr}{\textup{Tr}}     
\begin{document}
%
\title{A Two-Stage Architecture for Differentially Private Kalman Filtering and LQG Control} 

%
%

\author{Kwassi~H.~Degue
        and~Jerome~Le Ny,~\IEEEmembership{Senior Member,~IEEE}
\thanks{A preliminary version of this paper appeared in \cite{DegueGlobalSIP}. 
This work was supported by NSERC under Grant RGPIN-5287-2018 and RGPAS-2018-522686, 
by the Pierre Arbour Foundation doctoral scholarship and by an FRQNT doctoral scholarship.
}
\thanks{The authors are with the Department of Electrical Engineering,
Polytechnique Montreal and GERAD, Montreal, QC H3T-1J4, Canada 
{\tt\small \{kwassi-holali.degue, jerome.le-ny\}@polymtl.ca}.}}

%
%

\markboth{}%
{Degue et al.: Differentially Private Kalman Filtering and LQG Control}
%



\maketitle 

\begin{abstract}
Large-scale monitoring and control systems enabling a more intelligent infrastructure 
increasingly rely on sensitive data obtained from private agents, e.g., location 
traces collected from the users of an intelligent transportation system. 
In order to encourage the participation of these agents, it becomes then critical 
to design algorithms that process information in a privacy-preserving way. 
This article revisits the Kalman filtering and Linear Quadratic Gaussian (LQG) 
control problems, subject to privacy constraints. 
We aim to enforce differential privacy, a formal, state-of-the-art 
definition of privacy ensuring that the output of an algorithm is not too 
sensitive to the data collected from any single participating agent.
A two-stage architecture is proposed that first aggregates and combines the individual 
agent signals before adding privacy-preserving noise and post-filtering the result to 
be published. We show a significant performance improvement offered by this architecture 
over input perturbation schemes as the number of input signals increases and that 
an optimal static aggregation stage can be computed by solving a semidefinite program.
The two-stage architecture, which we develop first for Kalman filtering, is then 
adapted to the LQG control problem by leveraging the separation principle.
Numerical simulations illustrate the performance improvements over differentially 
private algorithms without first-stage signal aggregation. 
\end{abstract}

\begin{IEEEkeywords}
Differential privacy; Kalman filtering; Estimation; Filtering; 
LQG control; Optimal control
\end{IEEEkeywords}

%

\section{Introduction}

To monitor and control intelligent infrastructure systems such as smart grids,
smart buildings or smart cities, data needs to be continuously collected 
from the people interacting with these systems, either through sensors installed 
in the environment such as cameras and smart meters, or through personal devices 
such as smartphones. 
Hence, in contrast to more traditional control systems, the measured signals for
such systems often contain highly privacy-sensitive information, e.g., related 
to the real-time location or health of a person.
For example, the accuracy of crowd-sourced traffic maps and congestion-aware routing 
applications is increased by using data provided by smartphones and connected vehicles
\cite{Herrera:TR09:mobileCentury}.
However, individual location data turns out to be very difficult to properly 
anonymize because individuals have highly unique mobility patterns 
\cite{Shokri2011,Montjoye:SR13:locationPrivacy}, 
and in fact individual trajectories can be reconstructed even from 
just aggregate location data \cite{Xu:WWW17:trajectoryReconstruction,Pyrgelis:PET17:privacyEvaluation}.
Similarly, fine-grained measurements of a house's electric power consumption 
collected by a smart meter can enable demand-response schemes, but can 
also be used to infer the activities of the occupants, 
by identifying the usage of individual appliances 
\cite{Hart:IEEE92:loadMonitoring,Bauer2009,Lisovich2010,Molina:workshop10:smartMeters}.
Therefore, it is necessary to implement privacy-preserving mechanisms 
when sensitive data must be shared to improve a system's performance.

Various definitions of privacy have been proposed that are amenable to formal analysis.
While a survey of such definitions is out of the scope of this paper, we can mention 
some recent work focusing on signal processing and control problems.
Privacy is measured by a \emph{lower bound} on the mutual information between published
and private signals in \cite{Sankar2011}, 
on the Fisher information in \cite{Farokhi:book20:fisherPrivacy}, or on the error covariance 
of the estimator of a sensitive signal 
in \cite{Mo:TAC16:privateConsensus, Song:TIFS20:compressivePrivacy}.
The concept of $k$-anonymity and its extensions has been applied to the publication 
of location traces in \cite{Fei2019}.
But much of the recent research on privacy-preserving data analysis relies on the notion
of \textit{differential privacy} \cite{Dwork2006,Dwork2006_new,DworkBook}. 
In the standard set-up, which is also the situation considered in this article, a data
holder aims to release the results of computations based on private data. Differential
privacy is enforced by adding an appropriate amount of noise to the published results, 
in such a way that the probability distribution over the outputs does not depend too 
much on the data of any single individual.
As a result, the ability of a third party observing the outputs to make 
new inferences about a given person is roughly the same, whether or not that
person chooses to contribute its data. This guarantee can then be used to 
weigh the risks of information disclosure against the benefits of publishing 
more accurate analyses.

A large number of techniques have been developed to compute differentially private
versions of various statistics, see \cite{DworkBook} for an overview.
Nevertheless, the differentially private analysis of \emph{streaming} data remains
relatively less explored \cite{Dwork2010,Sankar2011,Fan2014}, despite its 
importance for signal processing and control applications. 
Some previous work has focused on the design of differentially private dynamic estimators \cite{LeNyTAC2014,LeNy:TAC18:MIMOdpef,LeNy:IJRNC18:dpContraction,Mcglinchey:ECC18:dpPositiveObs}, 
controllers \cite{WangIEE2017,HaleACC2018}, 
consensus algorithms \cite{Huang:workshop12:dpConsensus,Nozari:Automatica17:dpConsensus},
or anomaly detectors 
\cite{Cummings:NIPS18:dpDetection,DegueAllerton2018,Rostampour:SAFEPROCESS18:dpDetection}.
In particular, \cite{LeNyTAC2014} discusses the Kalman filtering problem under 
a differential privacy constraint and compares schemes introducing noise either 
directly on the measured signals (input perturbation mechanisms) or on the 
published estimate (output perturbation mechanisms). Output perturbation provides
better performance as the number of input signals increases, but has the
drawback of leaving unfiltered noise on the output, which motivates the 
two-stage architecture that we consider here.
In \cite{WangIEE2017}, the authors consider a multi-agent linear quadratic tracking 
problem where the trajectory tracked by each agent should remain private, while
\cite{HaleACC2018} considers an LQG control problem where each agent wishes 
to keep its individual state private. In both cases, noise is added directly
on the individual measurements, a form of input perturbation.

In this paper we study the design of Kalman filters and LQG controllers subject 
to a differential privacy constraint on the measured signals. 
These problems, stated formally in Section \ref{sec:statement}, arise when a data
collector measures private signals originating from a population of agents, whose 
dynamics can be modeled as linear Gaussian systems, in order to publish in real-time 
either an estimate of an aggregate state of the agent population, or a control signal 
shared by the agents and aimed at regulating such an aggregate state.
As a motivating example, one can consider the problem of controlling the distribution
of vehicles on a road network by means of traffic messages broadcasted to all cars, with
the current density estimated from location data obtained from the smartphones of 
individual drivers.

Section \ref{sec:DP Kalman Filtering} presents the main contribution of this paper,
namely, a two-stage architecture for differentially private Kalman filtering, 
where the privacy-preserving noise is added only after an input stage appropriately 
combining the measured signals of the individual agents, while an output stage filters 
out this noise.
Such two-stage architectures were discussed in \cite{LeNyTAC2014} but have not yet
been applied to the Kalman filtering problem, and we argue first in 
Section \ref{sec:Main-results-Scalar} via a simple example that significant performance 
improvements can be expected compared to input perturbation mechanisms. 
We show that the optimal input stage can be computed by solving a semidefinite
program (SDP), hence, a tractable convex optimization problem. 
The fact that the input stage design problem admits an SDP formulation 
is reminiscent of other Kalman filtering problems subject to resource 
or communication constraints, see, e.g., 
\cite{LeNy:TAC11:KFscheduling,Mourikis:TRO06:optimalSensorScheduling,Tanaka2017},
but the SDP capturing specifically the differential privacy constraint is new.
The design procedure is then adapted in Section \ref{sec:ProblemFormulationControl} 
to the LQG control problem. By exploiting the classical properties of the optimal
LQG controller (linearity and separation principle), we can view the control problem 
as the problem of estimating a certain linear combination of the agent states as 
in Section \ref{sec:DP Kalman Filtering}, but for a specific cost on the estimation error.

A conference version of this paper appeared in \cite{DegueGlobalSIP}, but contained no proof
and did not discuss the LQG control problem. 
The numerical examples discussed in Sections \ref{sec:Examples} and \ref{sec:ControlNumericalexperiments} 
are also new.
Finally, we introduce some notation used throughout this paper. 
We fix a generic probability triple $(\Omega,\mathcal{F},\mathbb{P})$, where $\mathcal{F}$ 
is a $\sigma$-algebra on $\Omega$ and $\mathbb{P}$ a probability measure
defined on $\mathcal F$. 
The notation $X \sim \mathcal N(\mu,\Sigma)$ means that $X$ is a Gaussian 
random vector with mean vector $\mu$ and covariance matrix $\Sigma$.
``Independent and identically distributed'' is abbreviated iid.
We denote the $p$-norm of a vector $x\in\mathbb{R}{}^{k}$ by 
$|x| _{p}:=(\sum_{i=1}^{k}|x_{i}|^{p})^{1/p}$, for $p \in [1,\infty)$. 
For a matrix $A$, the induced 2-norm (maximum singular value of $A$) is denoted 
$\|A\|_2$ and the Frobenius norm $\|A\|_F := \sqrt{\text{Tr}(A^{\text{T}}A)}$.
If $A$ and $B$ are symmetric matrices, $A \succeq B$ (resp. $A \succ B$) means that 
$A-B$ is positive semi-definite (resp. positive definite).
We use the notation $\text{diag}(A_1,\ldots,A_n)$ to represent a block-diagonal
matrix with the matrices $A_i$ on the diagonal. The column vector of size $n$ with all
components equal to $1$ is denoted $\mathbf 1_n$.
Finally, for a discrete-time signal $x$ we denote $x_{0:t} := \{x_0,\ldots,x_t\}$.

\section{Problem Statement} 
\label{sec:statement} 

\subsection{Privacy-Preserving State Estimation and LQG Control for a Population of Dynamic Agents}
\label{section: estimation and control problems}

Consider a set of $n$ privacy-sensitive signals $\{y_{i,t}\}_{0 \leq t \leq T}$, 
$i=1,\ldots,n$, with $y_{i,t} \in \mathbb R^{p_i}$, collected by a data aggregator, 
and which could originate from $n$ distinct agents. Let $p = \sum_{i=1}^n p_i$.
We assume that a mathematical model capturing known dynamic and statistical properties
of these signals is publicly available, consisting of a linear system with $n$ independent 
(vector-valued) states associated to the $n$ measured signals
\begin{align}
\begin{split}  \label{eq:IndividualGeneral}
x_{i,t+1} &= A_{i,t} \, x_{i,t} +B_{i,t} \, u_{t}+ w_{i,t}, \;\; 0 \leq t \leq T-1, \\
y_{i,t} &= C_{i,t} \, x_{i,t} + v_{i,t}, \;\; 0 \leq t \leq T,    
\end{split}
\end{align}
for $i=1,\ldots,n$, where $x_{i,t}, w_{i,t} \in \mathbb R^{m_i}$.
Here $w_{i,t} \sim \mathcal{N}(0,W_{i,t})$ and $v_{i,t} \sim \mathcal{N}(0,V_{i})$ are 
independent sequences of iid zero-mean Gaussian random vectors with covariance matrices 
$W_{i,t} \succ 0, V_{i} \succ 0$, for $i=1, \ldots, n$. 
In particular, assuming that the matrices $W_{i,t}$ are invertible is necessary in the 
following to be able to use the ``information filter'' form of the Kalman filter 
equations \cite{Anderson2005}.
The sequence $u$ with $u_{t}\in\mathbb{R}^{h}$ represents a control input that is shared 
by the $n$ individuals. This is motivated by scenarios in which a common signal is broadcast
to drive the aggregate state of a population, while individual signals can still be subject to
privacy constraints. 
The initial conditions $x_{i,0}$ are independent Gaussian random vectors that are also 
independent of the noise processes $w$ and $v$, with mean $\overline{x}_{i,0}$ 
and covariance matrices $\Sigma_{i,0}^{-} \succ 0$. 
Let $x_t := [x_{1,t}^T,\ldots,x_{n,t}^T]^T$, $y_t := [y_{1,t}^T,\ldots,y_{n,t}^T]^T$, 
$w_t = [w_{1,t}^T,\ldots,w_{n,t}^T]^T$ and  $v_t = [v_{1,t}^T,\ldots,v_{n,t}^T]^T$ denote 
the global state, measurement and noise signals of \eqref{eq:IndividualGeneral}. 
Define $A_t := \textup{diag}(A_{1,t},\ldots,A_{n,t})$, $B_t = [B_{1,t}^T,\ldots,B_{n,t}^T]^T$, $C_t = \textup{diag}(C_{1,t},\ldots,C_{n,t})$,
$W_t = \textup{diag}(W_{1,t},\ldots,W_{n,t})$, $V = \textup{diag}(V_1,\ldots,V_n)$. Then the system \eqref{eq:IndividualGeneral}
can be rewritten more compactly as
\begin{align}
x_{t+1} & = A_t \, x_{t}+ B_t \, u_t + w_{t},	 \;\; 0 \leq t \leq T-1,	\label{eq:Global}  \\
y_{t} & = C_t \, x_{t}+v_{t}, \;\; 0 \leq t \leq T,	\label{eq:Global measurements}
\end{align}
with $w_t \sim \mathcal N(0,W_t)$ and $v_t \sim \mathcal N(0,V)$. Throughout the paper,
the model parameters $\overline{x}_{i,0}$, $\Sigma^-_{i,0}$, $A_{i,t}$, $B_{i,t}$, $C_{i,t}$, 
$W_{i,t}$, $V_i$, are assumed to be publicly known information.

In Section \ref{sec:DP Kalman Filtering}, we first consider a filtering problem (with $u$ a known signal) 
where the data aggregator aims to publish at each period $t$ a causal estimate $\hat z_t$ of 
a linear combination $z_{t}= L_t x_t = \sum_{i=1}^{n}L_{i,t}x_{i,t}$ of the individual states, 
computed from the signals $y_{i}$, with $L_t := \begin{bmatrix} L_{i,t}, \ldots, L_{n,t} \end{bmatrix}$ 
some given (publicly known) matrices. 
This estimator should minimize the Mean Square Error (MSE) performance measure
\begin{equation}	\label{eq: MSE performance}
E_T := \frac{1}{T+1} \sum_{t=0}^T \mathbb E \left[\|z_t - \hat z_t\|_2^2\right].
\end{equation}
For privacy reasons, the signals $y_{i}$ are not released by the data aggregator and moreover 
the publicly released estimate $\hat z$ should also guarantee the differential privacy of the input 
signals $y_i$, as defined precisely in Section \ref{section: Differential privacy background}.
For example, the signals $y_i$ could represent position measurements of $n$ individuals, 
each state $x_i$ could consist of the position and velocity of individual $i$, 
and the goal might be to publish only a real-time estimate of the average velocity 
of all individuals.
Note that \emph{in the absence of privacy constraint}, 
the optimal estimator is $\hat{z}_{t}=\sum_{i=1}^{n}L_{i,t} \hat{x}_{i,t}$, with $\hat{x}_{i,t}$ 
provided by the (time-varying) Kalman filter estimating the state $x_i$ of subsystem $i$ from the 
signal $y_i$ \cite{Anderson2005}, and in particular the estimation problem then decouples for the 
$n$ subsystems.

Next, in Section \ref{sec:ProblemFormulationControl} we build on the results obtained for
the filtering problem to study the following privacy-constrained Linear Quadratic Gaussian (LQG) 
regulation problem. The data aggregator uses the measured
signals $y_i, 1 \leq i \leq n$, to compute and broadcast a (causal) control signal $u$ that minimizes 
the following quadratic cost for the $n$ agents
\begin{equation}	\label{eq: cost function LQG}
J_T= \frac{1}{T+1} \mathbb{\mathbb{E}} \left[ \sum_{t=0}^{T-1} 
\left(x_{t}^T Q_t x_{t}+u_{t}^T R_t u_{t}\right) + x_{T}^T Q_T x_{T} \right],
\end{equation}
where $Q_t \succeq 0$ for $0 \leq t \leq T$ and $R_t \succ 0$ for $0 \leq t \leq T-1$ are
publicly known weight matrices.
Again, only the signal $u$ is published (in particular, it is available to the $n$ agents), 
and releasing $u$ must guarantee the differential privacy of the measured signals $y_i$.
It is worth noting that the cost function \eqref{eq: cost function LQG} can be used to
drive an aggregate value of the global population state toward $0$ rather than the
individual agent states, since trying to do the latter might be in direct conflict with
the privacy requirement (which, essentially, aims to hide the value of the individual
signals $y_i$, and hence indirectly of the individual states $x_i$).
For example, we can have 
$x_t Q_t x_t = \left ( \frac{1}{n} \sum_{i=1}^n x_{i,t} \right)^2$ if $Q_t = \frac{1}{n^2} \mathbf 1_n \mathbf 1_n^T$,
in order to regulate the average population state.
Figure \ref{fig:problem setup} represents the estimation and control problem setups, including a basic
scheme to enforce differential privacy by injecting noise directly in the signals $\{y_i\}_{1 \leq i \leq n}$,
as described in Section \ref{section: input perturbation}.

\begin{figure}  
    \centering
    \begin{subfigure}[b]{0.23\textwidth}
        \includegraphics[width=\textwidth]{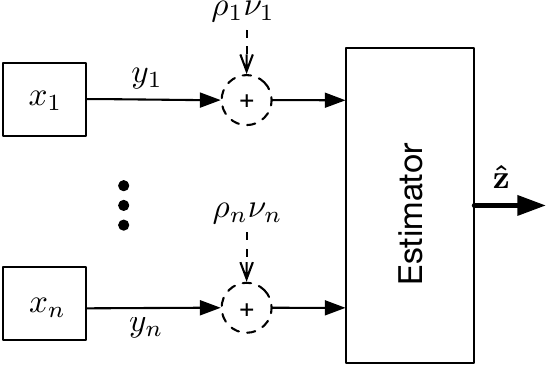}
        \caption{Estimation Problem}
        \label{fig:filtering setup}
    \end{subfigure}
    ~ 
    \begin{subfigure}[b]{0.23\textwidth}
        \includegraphics[width=\textwidth]{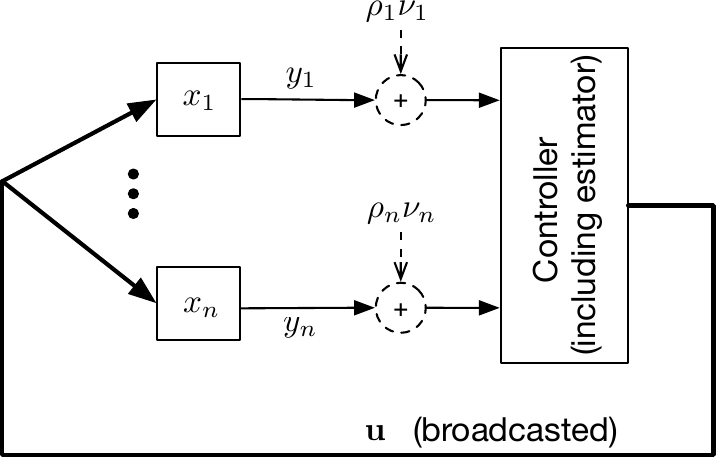}
        \caption{Control Problem}
        \label{fig:control setup}
    \end{subfigure}
    \caption{Estimation and control problem setup. Signals $\{y_i\}_{1 \leq i \leq n}$ produced by $n$ independent agents with states 
    $\{x_i\}_{1 \leq i \leq n}$ are collected for monitoring or control purposes. The thicker lines correspond to published signals (estimate 
    $\hat z$ or broadcasted control input $u$). 
    Dashed lines represent the injection of privacy-preserving noise by the input perturbation mechanism,
    a simple scheme to enforce differential privacy described in Section \ref{section: input perturbation}.}	\label{fig:problem setup}
\end{figure}

Note that the steady-state versions of both the filtering and LQG control problems are also considered, with the 
model \eqref{eq:Global}-\eqref{eq:Global measurements} in this case assumed time-invariant and the performance 
measures defined as
\begin{equation} \label{eq: asymptotic performance}
E_\infty := \lim_{T \to \infty} E_T, \;\; J_\infty := \lim_{T \to \infty} J_T.
\end{equation}
To finish stating the above problems formally, we define in the next section the differential privacy 
constraint imposed on the signal $\hat z$ for the filtering problem and $u$ for the LQG problem.

\subsection{Differential Privacy Constraint}
\label{section: Differential privacy background} 

Differential privacy \cite{DworkBook} is a property satisfied by certain randomized algorithms (also called 
mechanisms), which in an abstract setting compute outputs in a space $\mathsf R$ based on sensitive 
data in a space $\mathsf D$. To be differentially private, the algorithm must ensure that the probability
distribution of its randomized output is not very sensitive to certain variations in the input data, which
are specified as part of the privacy requirement.

More concretely, we equip the input space $\mathsf D$ with a symmetric binary relation called
adjacency and denoted $\text{Adj}$, which captures the variations in the input datasets that we want to
make hard to detect by observing the outputs.
In our case, the input space $\mathsf D$ is the vector space $\mathbb R^{p(T+1)}$ of global measurement 
signals $y$, and a mechanism is a causal stochastic system producing an output signal ($\hat z$ or $u$ 
in the previous section) based on its input $y$.
We define two measured signals to be adjacent if the following condition holds
\begin{align}	\label{eq:adjacency}
\begin{split}
\text{Adj}(y,y')\,\text{iff for some}\, 1 \leq i \leq n,\,\| y_{i}-y_{i}'\| _{2}\leq\rho_{i}, \\
\text{and}\,y_{j}=y_{j}'\,\text{for all}\,j\neq i, 
\end{split}
\end{align}
with $\{\rho_i\}_{i=1}^{n}\in\mathbb{R}_{+}^{n}$ a given set of positive numbers
and with the definition of the $\ell_2$-norm $\|v\|_2 := \left(\sum_{t=0}^T |v_t|_2^2\right)^{1/2}$ for
a vector-valued signal $v$.
In other words, two adjacent measurement signals can differ by the values of 
a single participant, with only $\ell_2$-bounded signal deviations allowed for each individual.
For any two inputs that satisfy the adjacency relation, the following definition characterizes 
the deviation that is allowed for a differentially private mechanism's output distribution. 

\begin{defn} \label{def:Differential privacy}
Let $\mathsf{D}$ be a space equipped with a symmetric binary relation denoted $\text{Adj}$ 
and let $(\mathsf{R},\ensuremath{\mathcal{R}})$ be a measurable space, where 
$\ensuremath{\mathcal{R}}$ is a given $\sigma$-algebra over $\mathsf{R}$. 
Let $\epsilon \geq 0$, $1 \geq \delta \geq 0$. A randomized mechanism $M$ from 
$\mathsf{D}$ to $\mathsf{R}$ is $(\epsilon,\delta)$-differentially private (for $\text{Adj}$) if 
for all $d,d'\in\mathsf{D}$ such that $\text{Adj}(d,d')$, 
\begin{equation}
\mathbb{P}(M(d)\in S) \leq e^\epsilon \, \mathbb{P}(M(d')\in S) + \delta, 
\, \forall S \in \mathcal{R}. \label{eq:privacydefinition}
\end{equation}
\end{defn}  

In Definition \ref{def:Differential privacy}, smaller values of $\epsilon$ and $\delta$ correspond 
to stronger privacy guarantees, i.e., distributions for $M(d)$ and $M(d')$ that are closer in
\eqref{eq:privacydefinition}. Next, we need tools that can be used to enforce the property of 
Definition \ref{def:Differential privacy}. The following definition is useful for mechanisms 
such as ours that produce outputs in vector spaces.

\begin{defn}	\label{def: sensitivity}
Let $\mathsf{D}$ be a space equipped with an adjacency relation $\text{Adj}$.  
Let $\mathsf R$ be a vector space equipped with a norm $\|\cdot\|_{\mathsf R}$. 
The sensitivity of a mapping $q:\mathsf{D} \mapsto \mathsf{R}$ is defined as
\[
\triangle q := \sup_{\{d,d':\text{Adj}(d,d')\}}
\| q(d)-q(d') \|_{\mathsf{R}}.
\]
For $\mathsf{R}=\mathbb{R}^{h(T+1)}$ or $(\mathbb R^h)^\mathbb N$ equipped with 
the $\ell_2$-norm, this defines the $\ell_{2}$-sensitivity of $q$, denoted $\triangle_{2}q$. 
\end{defn}

The \emph{Gaussian mechanism} \cite{Dwork2006_new1} consists in adding Gaussian noise 
proportional to the $\ell_2$-sensitivity of a mapping to enforce $(\epsilon,\delta)$-differential privacy. 
A fairly tight upper bound on the proportionality constant is provided in \cite{LeNyTAC2014}.
Recall first the definition of the $\mathcal{Q}$-function
$\mathcal{Q}(x):=\frac{1}{\sqrt{2\pi}}\int_{x}^{\infty}\exp(-\frac{u^{2}}{2})du$, which is monotonically 
decreasing from $(-\infty,\infty)$ to $(0,1)$.
Then, for $1 > \delta > 0$, define
$\kappa_{\delta,\epsilon} := \frac{1}{2\epsilon} \left( Q^{-1}(\delta)+\sqrt{(Q^{-1}(\delta))^{2}+2\epsilon} \right)$.
The following theorem can be found in \cite{LeNyTAC2014}.

\begin{thm}	\label{eq: basic DP mechanism}
Let $\epsilon > 0$, $1 > \delta > 0$. Let $G$ be a dynamic system with $p$ inputs and $q$ outputs. 
Then the mechanism $M(y) = Gy + \nu$, where $\nu$ is a white Gaussian noise (sequence of iid zero-mean 
Gaussian vectors) with $\nu_t \sim \mathcal N(0,\kappa_{\delta,\epsilon}^2 (\Delta_2 G)^2 I_q)$,
is $(\epsilon,\delta)$-differentially private.
\end{thm}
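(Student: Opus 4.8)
Write $\Delta := \Delta_2 G$ and $\sigma := \kappa_{\delta,\epsilon}\Delta$, so that each $\nu_t$ has covariance $\sigma^2 I_q$. The plan is to reduce the claim to the one-dimensional Gaussian mechanism and then to a tail estimate for the privacy loss. First I would fix a pair of adjacent inputs $y,y'$ and set $d := G(y-y')$; by definition of the $\ell_2$-sensitivity, $\|d\|_2 \le \Delta$. The outputs $M(y)$ and $M(y')$ are then Gaussian signals with means $Gy$, $Gy'$ and common per-sample covariance $\sigma^2 I_q$; denote by $p_y$, $p_{y'}$ their densities (with respect to Lebesgue measure in the finite-horizon case, or a common reference Gaussian measure in general).

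Next I would exploit the fact that $M(y)$ and $M(y')$ differ only by the mean shift $d$ while sharing an identity-proportional covariance: the output component orthogonal to $d$ has the same law under both, so the privacy loss $L(o) := \ln\bigl(p_y(o)/p_{y'}(o)\bigr)$ is a function of the single scalar $\langle o - Gy,\, d\rangle$ alone. A direct computation gives $L(o) = \sigma^{-2}\langle o - Gy,\, d\rangle + \|d\|_2^2/(2\sigma^2)$, hence, when $o \sim M(y)$, $L(o)$ is Gaussian with mean $\|d\|_2^2/(2\sigma^2)$ and variance $\|d\|_2^2/\sigma^2$.

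Then I would invoke the standard passage from a privacy-loss bound to $(\epsilon,\delta)$-differential privacy. For $S \in \mathcal{M}$, split $S = S_1 \cup S_2$ with $S_1 := \{o\in S : p_y(o) \le e^\epsilon p_{y'}(o)\}$: then $\mathbb{P}(M(y)\in S_1) \le e^\epsilon\,\mathbb{P}(M(y')\in S_1) \le e^\epsilon\,\mathbb{P}(M(y')\in S)$, while $\mathbb{P}(M(y)\in S_2) \le \mathbb{P}_{o\sim M(y)}\bigl(L(o) > \epsilon\bigr)$; adding these, it suffices to prove $\mathbb{P}_{o\sim M(y)}(L(o) > \epsilon) \le \delta$ for every adjacent pair. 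Standardizing the Gaussian $L$ and writing $\eta := \|d\|_2/\sigma \in (0,\Delta/\sigma]$ (the case $d=0$ being trivial), this probability equals $\mathcal{Q}\bigl(\epsilon/\eta - \eta/2\bigr)$; since $\eta \mapsto \epsilon/\eta - \eta/2$ and $\mathcal{Q}$ are both decreasing, the worst case is $\eta = \Delta/\sigma = 1/\kappa_{\delta,\epsilon}$, so it is enough that $\mathcal{Q}\bigl(\epsilon\kappa_{\delta,\epsilon} - \tfrac{1}{2\kappa_{\delta,\epsilon}}\bigr) \le \delta$, i.e. $\epsilon\kappa_{\delta,\epsilon} - \tfrac{1}{2\kappa_{\delta,\epsilon}} \ge \mu$ (taking $\delta \le 1/2$, so $\mu = \mathcal{Q}^{-1}(\delta) \ge 0$). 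This rearranges to $2\epsilon\kappa_{\delta,\epsilon}^2 - 2\mu\kappa_{\delta,\epsilon} - 1 \ge 0$, and $\kappa_{\delta,\epsilon} = \tfrac{1}{2\epsilon}(\mu + \sqrt{\mu^2 + 2\epsilon})$ is precisely the positive root of the associated quadratic, so the inequality holds with equality, which proves the theorem.

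The step I expect to require the most care is the first reduction: when $G$ processes infinite-horizon $\ell_2$ signals one must verify that $d \in \ell_2$ (which follows from boundedness of $G$ together with finiteness of the sensitivity) and that the two output laws are mutually absolutely continuous so that $L$ is well defined; this is exactly the Gaussian-mechanism argument of \cite{LeNyTAC2014}, and is immediate in the finite-horizon setting considered here. Everything after that is the classical scalar Gaussian-mechanism calculation.
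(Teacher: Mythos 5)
Your proof is correct, and it is essentially the standard Gaussian-mechanism argument that the paper itself does not reproduce but imports by citation from \cite{LeNyTAC2014} (and ultimately \cite{Dwork2006_new1}): reduce to a mean shift $d = G(y-y')$ with $\|d\|_2 \le \Delta_2 G$, show the privacy loss is Gaussian, bound its tail by $\mathcal{Q}(\epsilon/\eta - \eta/2)$, and observe that $\kappa_{\delta,\epsilon}$ is exactly the positive root making the worst-case tail equal to $\delta$. Since the paper offers no proof of its own to diverge from, there is nothing further to compare; your argument, including the caveat about absolute continuity in the infinite-horizon case, is the intended one.
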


In other words, Theorem \ref{eq: basic DP mechanism} says that we can produce a differentially
private signal by adding white Gaussian noise at the output of a system $G$ processing the sensitive
signal $y$, with covariance matrix $\sigma^2 I_q$, and $\sigma$ proportional to the $\ell_2$-sensitivity of $G$.

\subsection{A First Solution: Input Perturbation Mechanism}
\label{section: input perturbation}

An important property of differential privacy is its \emph{resilience to 
post-processing} \cite{DworkBook}, i.e., applying further computations to an output that is 
differentially private does not degrade the differential privacy guarantee, as long as the original
dataset is not reaccessed for these computations. 
This leads immediately to a first solution for the estimation and control problems stated in the previous 
section, called the \emph{input perturbation mechanism}, which consists in perturbing each measured
signal $y_i$ directly to release differentially private versions of these signals. 

First, note that the memoryless system defined by $(G y)_t = M y_t$, where $M$ is the diagonal matrix
$M = \text{diag} (1/\rho_1, \ldots, 1/\rho_n)$, has the sensibility bound
\[
\Delta G := \sup_{\text{Adj(y,y')}} \|My - My'\|_2 \leq 1
\]
for the adjacency relation \eqref{eq:adjacency}. Hence, by Theorem \ref{eq: basic DP mechanism}, 
releasing the signals $\{y_i/\rho_i + \nu_i\}_{1 \leq i \leq n}$, where each signal $\nu_i$ is 
a white Gaussian noise with covariance matrix $\kappa_{\delta,\epsilon}^2 I_{p_i}$, 
is $(\epsilon,\delta)$-differentially private. Equivalently, using the resilience to 
post-processing to multiply this output by $M^{-1}$, we see that releasing the signals 
$\{\tilde y_i := y_i + \rho_i \nu_i\}_{1 \leq i \leq n}$ is $(\epsilon,\delta)$-differentially private. 
Once these signals are released, applying further processing on them does not impact the 
differential privacy guarantee. Moreover, these signals are of the same form as the outputs of system \eqref{eq:IndividualGeneral}, except for a higher level of (still Gaussian) noise due to the addition 
of the artificial privacy-preserving noise. One can therefore produce an $(\epsilon,\delta)$-differentially 
private estimate $\hat z$ or control signal $u$ discussed in Section \ref{section: estimation and control problems} 
by applying standard Kalman filtering and LQG design techniques to the signals $\tilde y_i$,
see Figure \ref{fig:problem setup}.
An advantage of input perturbation mechanisms is that each agent can release directly the 
differentially private signal $\tilde y_i$, and hence does not need to trust the data aggregator
to enforce the differential privacy property. Moreover, this mechanism has the potentially useful
feature of publishing the individual signals $\tilde y_i$, which could be used for other
purposes than the original estimation or control problem.
Nonetheless, as we discuss in the following sections, input perturbation typically leads 
to a high level of noise and hence performance degradation, which motivates the search 
for better mechanisms.


\section{Differentially Private Kalman Filtering} \label{sec:DP Kalman Filtering}

\begin{figure}
	\centering
	\includegraphics[width=\columnwidth]{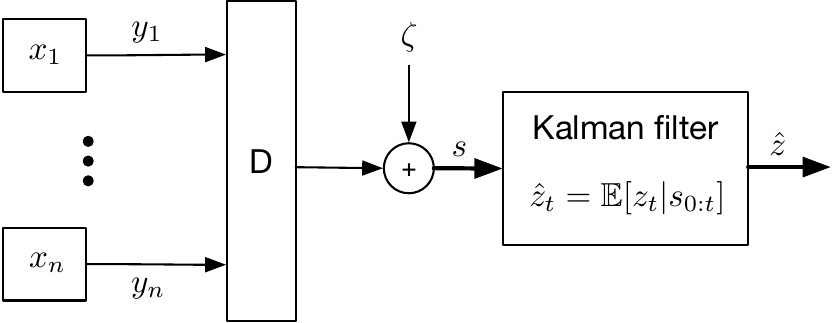}
	\caption{Differentially private Kalman filtering architecture with first-stage aggregation. 
	$D$ is a matrix to design, used to combine the individual measured signals appropriately.
	The signal $s$ is differentially private ($\zeta$ is a privacy-preserving white Gaussian noise signal), 
	and so is the released estimate $\hat z$ by resilience to post-processing.}
	 \label{fig:improvedArchitecture}
\end{figure}

Input perturbation for the differentially private Kalman filtering problem, as discussed 
in Section \ref{section: input perturbation} and represented on Figure \ref{fig:filtering setup}, 
was considered in \cite{LeNyAllerton2012}. 
Here, we show first in Section \ref{sec:Main-results-Scalar} via a simple example
that the performance of this mechanism can be significantly improved by combining 
the individual input signals before adding the privacy preserving noise.
This leads to the two-stage mechanism of Figure \ref{fig:improvedArchitecture},
whose systematic design is discussed in Section \ref{sec:Main-results-General}.

\subsection{A Scalar Example}  \label{sec:Main-results-Scalar}

Consider a scalar homogeneous version of model \eqref{eq:IndividualGeneral} with 
$A_{i,t}=a \in \mathbb R$, $B_{i,t} = b$, $C_{i,t}=c$, $W_{i,t}=\sigma_{w}^{2}$, $V_i=\sigma_{v}^{2}$ 
and $L_{i,t}=1$ (so $z_t = \sum_{i=1}^n x_{i,t}$), 
and assume $\rho_i = \rho$ 
for all $i$ in \eqref{eq:adjacency}.
In other words, $A_t = a I_n$, $B_t = b \mathbf 1_n$, $C_t = c I_n$, 
$W_t = \sigma_w^2 I_n$, $V = \sigma_v^2 I_n$ in \eqref{eq:Global}.
We also let $T \to \infty$ in the problem statement and consider the steady-state MSE
$E_\infty$, see \eqref{eq: asymptotic performance},
as performance measure for a given estimator $\hat z$ of $z$. Moreover in this section 
we consider for simplicity the minimum mean square error (MMSE) estimate 
$\hat z_t$ of $x_t$ given the measurements up to time $t-1$ only, since the corresponding 
MSE is directly obtained by solving an algebraic Riccati equation (ARE).

Let $\alpha := \kappa_{\delta,\epsilon} \, \rho$. Since 
$\sup_{y,y':\text{Adj}(y,y')}\left\Vert y-y'\right\Vert _{2} = \rho$, 
by Theorem \ref{eq: basic DP mechanism} and as explained in Section \ref{section: input perturbation},
releasing the signal $r_{t}=  y_{t}+\nu_{t}$, with white noise $\nu$ such that $\nu_{t}\sim\mathcal{N}(0,\alpha^{2}I_{n})$, 
is $(\epsilon,\delta)$-differentially private for the adjacency relation \eqref{eq:adjacency}. 
The steady-state MSE of a Kalman filter estimating $z$ for the system with dynamics as in \eqref{eq:Global}
and measurements $r$ is obtained by solving a 
scalar ARE, which leads to the following expression 
\begin{equation}
E_\infty^1 =
\frac{n}{2c^{2}}\left(-\beta+\sqrt{\beta^{2}+4(\alpha^{2}
+\sigma_{v}^{2})\sigma_{w}^{2}c^{2}}\right),\label{eq:costnew}
\end{equation}
where $\beta = (1-a^2) (\sigma_v^2+\alpha^2)-c^2 \sigma_w^2$.

Instead of input perturbation, we can use the architecture shown on 
Fig. \ref{fig:improvedArchitecture} with $D=\mathbf{1}_n^T$, 
a $1 \times n$ row vector of ones.
Consider the same adjacency relation (\ref{eq:adjacency}) and denote 
$\eta_{t} = D y_t = \sum_{i=1}^{n}y_{i,t}$, $\theta_{t}=\sum_{i=1}^{n}w_{i,t}$ 
and $\lambda_{t}=\sum_{i=1}^{n}v_{i,t}$.
We have
\begin{align*}
z_{t+1} & = a z_{t}+\theta_{t},\\
\eta_{t} & = c z_{t}+\lambda_{t}.
\end{align*}
Since again $\sup_{y,y':\text{Adj}(y,y')}\left\Vert Dy-Dy'\right\Vert _{2} = \rho$,
releasing the scalar signal $s_{t}= \eta_{t} + \zeta_{t}$, with $\zeta_{t}\sim\mathcal{N}(0,\alpha^{2})$, 
is $(\epsilon,\delta)$-differentially private for the adjacency relation (\ref{eq:adjacency}). 
The MSE of a Kalman filter estimating $z$ from this signal $s$, with the dynamics of the model \eqref{eq:Global},
can again be obtained by solving an ARE, which leads to the following expression
\begin{equation}
E_\infty^2 =
\frac{n}{2c^{2}} \left(-\beta_{(n)}+\sqrt{\beta_{(n)}^{2}+4\left(\frac{\alpha^{2}}{n} +\sigma_{v}^{2}\right) \sigma_{w}^{2} c^{2}}\right), 
\label{eq:cost2new}
\end{equation}
where $\beta_{(n)} = (1-a^2) \left(\sigma_v^2+\frac{\alpha^2}{n} \right)-c^2 \sigma_w^2$.

Comparing \eqref{eq:costnew} and \eqref{eq:cost2new}, we see that the only difference is
the vanishing influence of the privacy preserving noise on $E_\infty^2$ as $n$ increases, with
the term $\alpha^2/n$ replacing $\alpha^2$ in $E_\infty^1$.
For example, if $n=100$, $a=1$, $c=1$, $\rho=50$, $\epsilon=\ln(3)$, $\delta=0.05$, $\sigma_{w}^{2}=0.5$, 
$\sigma_{v}^{2}=0.9$, we obtain $E_\infty^1 \approx 6235$ and $E_\infty^2 \approx 650$.
It is indeed desirable that as the number of agents $n$ increases, differential privacy becomes easier to enforce
and the impact of the privacy requirement on achievable performance decreases, a feature that the architecture 
of Figure \ref{fig:improvedArchitecture} has the potential to achieve. The design of this architecture is discussed 
in the next section for the general filtering problem of Section \ref{sec:statement}.


\subsection{Design of the Two-Stage Mechanism} \label{sec:Main-results-General}

Following Figure \ref{fig:improvedArchitecture}, we construct a differentially private 
estimate $\hat z_t$ of $z_t$ by first multiplying the global signal $y$ with 
a constant matrix 
\begin{equation}	\label{eq: D decomposition}
D=\begin{bmatrix}D_{1} & D_{2} & \ldots & D_{n}\end{bmatrix}, 
\end{equation}
with the matrices $D_i \in \mathbb R^{q\times p_i}$, $1 \leq i \leq n$, to be designed and $q$ to be determined.
Then, we add white Gaussian noise $\zeta$ according to the Gaussian mechanism,
in order to make the signal $s$ differentially private, with
\begin{equation}
s_{t} = D y_{t} + \zeta_{t} = 
D C_{t} x_{t} + D v_{t} + \zeta_{t}, \; 0 \leq t \leq T. 
\label{eq:s-formula}
\end{equation}
Therefore, the role of the matrix $D$ is to combine the individual signals appropriately
before adding the privacy-preserving noise, in order to decrease the overall sensitivity
(see Definition \ref{def: sensitivity}), while preserving enough information for $z$ to
be estimated with sufficient accuracy.
Finally, we construct a causal MMSE estimator $\hat z$ 
of $z$ from $s$, a task for which it is optimal to use a Kalman filter, since the system 
model producing $s$ with the state dynamics of \eqref{eq:Global} is still linear and Gaussian. 
This Kalman filter produces a state estimate $\hat x$ of $x$ and then $\hat z_t = L_t \hat x_t$ 
for all $t$.

Given $D$, for measurement signals $y$ and $y'$ adjacent according to \eqref{eq:adjacency} 
and differing at index $i$, we have
\begin{align*}
\| Dy-Dy' \|_{2} = \| D_{i}y_{i}-D_{i}y_{i}'\|_{2} \leq \rho_i \|D_i\|_2,
\end{align*}
where $\|D_i\|_2$ denotes the maximum singular value of the matrix $D_i$,
and there are adjacent signals $y_i, y_i'$ achieving the bound.
Hence, we can bound the sensitivity of the memoryless system $y \mapsto Dy$ as follows 
\begin{align}
\triangle_{2}D & :=  \sup_{y,y':\text{Adj}(y,y')}\left\Vert Dy-Dy'\right\Vert _{2}\nonumber \\
& =  \underset{1 \leq i \leq n}{\max}\{ \rho_i \|  D_{i} \|_2 \}. \label{eq:sensitivity-2}
\end{align} 
Therefore, from Theorem \ref{eq: basic DP mechanism}, for any matrix $D$, releasing 
$s_{t}=D y_{t}+\zeta_{t}$, with $\zeta_{t}\sim\mathcal{N}(0,(\kappa_{\delta,\epsilon} \, \triangle_{2}D)^{2}I_{q})$,
is $(\epsilon,\delta)$-differentially private for the adjacency relation (\ref{eq:adjacency}). 
The estimate $\hat z$ is then also $(\epsilon,\delta)$-differentially private, since it is obtained
by post-processing $s$, without re-accessing the sensitive signal $y$.

\subsubsection{ \label{sec:SubsectionOptimalProblemFiltering}Input Transformation Optimization}

We can now consider the problem of optimizing the choice of matrix $D$. 
Let $\hat x_t^- = \mathbb E[x_t | s_{0:t-1}]$ and $\hat x_t = \mathbb E[x_t | s_{0:t}]$ be the 
state estimates produced by the Kalman filter of Figure \ref{fig:improvedArchitecture} after the 
prediction step and the measurement update step respectively \cite{Anderson2005}.
Let $\bar \Sigma_{t} = \mathbb E[(x_t - \hat x_t^-) (x_t - \hat x_t^-)^T | s_{0:t-1}]$ and 
$\Sigma_{t}  = \mathbb E[(x_t - \hat x_t) (x_t - \hat x_t)^T | s_{0:t}] $ be the corresponding
error covariance matrices. 
We also denote by $\bar \Sigma_{0} = \text{diag}(\Sigma^-_{1,0}, \ldots \Sigma^-_{n,0})$ 
the covariance matrix for the initial state $x_{0}$. 
For completeness, we recall here the equations of the Kalman filter.  
Given the dynamics \eqref{eq:Global} and the measurement 
equation \eqref{eq:s-formula}, with $\zeta$ a Gaussian noise with covariance matrix 
$(\kappa_{\delta,\epsilon} \, \triangle_{2}D)^2 I_q$, we have for $t \geq 0$ and starting
from $\hat x_0^- := \bar x_0$
\begin{align}
\begin{split}	\label{eq: KF}
\hat x_t &= \hat x_t^- + K^f_t (s_t - DC_t \hat x_t^-),  \\ 
\hat x_{t+1}^- &= A_t \hat x_t + B_t u_t, \quad \quad \text{with }  
\end{split} \\
K^f_t &= \bar \Sigma_t C_t^T D^T (D (C_t \bar \Sigma_t C_t^T + V) D^T 
+ \kappa_{\delta,\epsilon}^2 \Delta_2 D^2 I_q)^{-1}. \nonumber
\end{align}
The error covariance matrices evolve for $t \geq 0$ as
\begin{align*}
\Sigma_{t}^{-1} = \bar \Sigma_{t}^{-1}+C_{t}^T \Pi C_{t}, \;\;  
\bar \Sigma_{t+1} &= A_{t} \Sigma_{t}A_{t}^T+W_{t}, 
\end{align*}
where 
$\Pi = D^T(D V D^T+(\kappa_{\delta,\epsilon} \, \triangle_{2}D)^{2}I_{q})^{-1}D$.
With $z = L_t x_t$ and its estimator $\hat z_t = L_t \hat x_t$, we can rewrite
the MSE $E_T$ in \eqref{eq: MSE performance} as
\[
E_T = \frac{1}{T+1} \sum_{t=0}^{T} \text{Tr} (L_{t} \Sigma_{t} L_{t}^T).
\]
As a result, a matrix $D$ minimizing the MSE can be found by
solving the following optimization problem 
\begin{subequations}
\begin{align}
\min_{q \in \mathbb N, D \in \mathbb R^{q \times p}} & \quad \frac{1}{T+1} \sum_{t=0}^{T} \text{Tr} \left( L_{t} \Sigma_{t} L_{t}^T \right) 	\label{eq: cost function}  \\
\text{s.t.} \; \Sigma_{0}^{-1} &= \; \bar \Sigma_{0}^{-1} + C_{0}^T \Pi C_{0}, 	\label{eq: Sigma_0} \\ 
\Sigma_{t+1}^{-1} &= (A_{t} \Sigma_{t} A_{t}^T+ W_{t})^{-1} + C_{t+1}^T \Pi C_{t+1},  \label{eq: Sigma_t}  \\ 
& \hspace{3cm} 0 \leq t \leq T-1, \nonumber \\
\Pi &= D^{T} (D V D^{T} + \kappa_{\delta,\epsilon}^{2} (\Delta_{2}D)^{2} I_{q})^{-1} D.  \label{eq: SNRt}
\end{align}
\end{subequations}
In the minimization \eqref{eq: cost function}, we have emphasized that finding the first 
dimension $q$ of the matrix $D$ is part of the optimization problem.
Note also that we can write the optimization problem above equivalently as a minimization 
over the variables $D$, $\Pi$, and $\{\Sigma_t\}_{0 \leq t \leq T}$, but the variables 
other than $D$ can be immediately eliminated using the equality 
constraints \eqref{eq: Sigma_0}-\eqref{eq: SNRt}.

With our assumption $V \succ 0$, we obtain an equivalent form for \eqref{eq: SNRt}
by using the matrix inversion lemma 
\begin{align}
\Pi &=  V^{-1} - V^{-1} \left( V^{-1}
+ \frac{D^T D}{(\kappa_{\delta,\epsilon} \Delta_2 D)^2} \right)^{-1} V^{-1},
\label{eq: SNRt alt} 
\end{align}
or, alternatively,
\begin{align}
&\kappa_{\delta,\epsilon}^2 \left[ \left( V - V \Pi V \right)^{-1} - V^{-1} \right] = 
\frac{D^T D}{\Delta_2 D^2}\;\;.
\label{eq: Q matrix construction}
\end{align}

\subsubsection{Semidefinite Programming-based Synthesis}    \label{sec:SubsectionSDPFiltering} 

In this section, we show that the optimization problem \eqref{eq: cost function}-\eqref{eq: SNRt}
can be recast as a semidefinite program (SDP) and hence solved efficiently \cite{BoydLMI1994},
if we impose the following additional constraints on $D$
\begin{equation}	\label{eq: extra constraint}
\Delta_2 D = 1 = \rho_1 \|D_1\|_2 = \ldots = \rho_n \|D_n\|_2.
\end{equation}
First, the following Lemma shows that in fact no loss of performance occurs 
by adding the constraint \eqref{eq: extra constraint} to 
\eqref{eq: cost function}-\eqref{eq: SNRt}, i.e., that this constraint 
is satisfied automatically by some matrix $D^*$ that is optimal 
for \eqref{eq: cost function}-\eqref{eq: SNRt}. 

\begin{lem}	\label{lem: extra constraint}
For any feasible solution 
$D$ of \eqref{eq: Sigma_0}-\eqref{eq: SNRt} that does not satisfy 
\eqref{eq: extra constraint}, there exists a feasible solution that does satisfy this constraint 
and gives a lower or equal cost in \eqref{eq: cost function}. 
In particular, adding the constraint \eqref{eq: extra constraint} to the problem 
\eqref{eq: cost function}-\eqref{eq: SNRt} does not change the value of the
minimum nor the existence of a minimizer.
\end{lem}

\begin{proof}
Consider a matrix $D$ and a corresponding sequence $\{\Sigma_{t}\}_{0 \leq t \leq T}$ defined by the 
iterations \eqref{eq: Sigma_0}-\eqref{eq: SNRt}.
First, rescaling $D$ to $\lambda D$ for any $\lambda \neq 0$ does not impact the constraint 
\eqref{eq: SNRt} (note that $\Delta_2 (\lambda D) = \lambda \Delta_2 D$), and so we can add
the constraint $\Delta_2 D = 1$ without changing the solution of the optimization problem 
\eqref{eq: cost function}-\eqref{eq: SNRt}.

Next, if the other constraints of \eqref{eq: extra constraint} are not satisfied by $D$, construct 
the $p \times p$ matrix $M = D^T D + \text{diag}\left(\left\{ \eta_i I_{p_i}\right\}_{1 \leq i \leq n}\right)$,
with $\eta_i = (\Delta_2 D / \rho_i)^2 - \|D_i\|^2_2$. Since $\eta_i \geq 0$ by \eqref{eq:sensitivity-2},
$M$ is positive semi-definite. The $i^{\text{th}}$ diagonal block of $M$ is
$M_{ii} = D_i^T D_i + [(\Delta_2 D / \rho_i)^2 - \|D_i\|^2_2] I_{p_i}$, which has
maximum eigenvalue $(\Delta_2 D / \rho_i)^2$. Define some matrix $\tilde D$ such that
$\tilde D^T \tilde D = M$ and group the columns of $\tilde D$ as 
$\tilde D = \begin{bmatrix} \tilde D_1 & \ldots \tilde D_n \end{bmatrix}$ as for $D$, so that
$\tilde D_i$ consists of $p_i$ columns. In particular $M_{ii} = \tilde D_i^T \tilde D_i$, so $\tilde D_i^T \tilde D_i$
has maximum eigenvalue $(\Delta_2 D / \rho_i)^2$ and hence $\tilde D_i$ has maximum singular
value $(\Delta_2 D / \rho_i)$. In other words, $\tilde D$ satisfies \eqref{eq: extra constraint}
with a sensibility $\Delta_2 D = \Delta_2 \tilde D$ that is unchanged, and moreover $\tilde D^T \tilde D = M \succeq D^T D$.

Therefore, when we replace $D$ by $\tilde D$, $\Delta_2 D$ in the denominator of \eqref{eq: SNRt alt} remains unchanged,
and moreover 
\begin{align*}
\left( V^{-1} + \frac{\tilde D^{T} \tilde D}{(\kappa_{\delta,\epsilon} \Delta_2 \tilde D)^2} \right)^{-1} 
&\preceq \left( V^{-1} + \frac{D^{T} D}{(\kappa_{\delta,\epsilon} \Delta_2 D)^2} \right)^{-1} 
\end{align*}
hence $\tilde \Pi \succeq \Pi$,
where $\tilde \Pi$ and $\Pi$ are defined according to \eqref{eq: SNRt alt} or equivalently \eqref{eq: SNRt} 
for $\tilde D$ and $D$ respectively. Let $K := \tilde \Pi - \Pi \succeq 0$.
Replacing $\Pi$ by $\tilde \Pi$ in \eqref{eq: Sigma_0}, we obtain a matrix $\tilde \Sigma_{0}$
satisfying $\tilde \Sigma_{0}^{-1} = \Sigma_{0}^{-1} + C_{0}^T K C_{0} \succeq \Sigma_{0}^{-1}$,
so $\tilde \Sigma_{0} \preceq \Sigma_{0}$.
Now if we have two matrices $\tilde \Sigma_{t} \preceq \Sigma_{t}$, and we use these two matrices 
together with $\tilde \Pi$ and $\Pi$ to define $\tilde \Sigma_{t+1}, \Sigma_{t+1}$ according to \eqref{eq: Sigma_t}, 
then immediately 
\begin{align*}
\tilde \Sigma_{t+1}^{-1} &= (A_{ t} \tilde \Sigma_{ t} A_{ t}^{T} 
+ W_{ t})^{-1} + C_{t+1}^T \tilde \Pi C_{t+1} \\
&\succeq (A_{ t} \Sigma_{ t} A_{ t}^{T} + W_{ t})^{-1} + 
C_{t+1}^T \tilde \Pi C_{t+1} \\
&= \Sigma_{t+1}^{-1} + C_{t+1}^T K C_{t+1} \succeq \Sigma^{-1}_{t+1}.
\end{align*}
Therefore, $\tilde \Sigma_{t+1} \preceq \Sigma_{t+1}$.
Hence, by induction, starting from $\tilde D$ we obtain a sequence $\{\tilde \Sigma_{t}\}_t$ such 
that $\tilde \Sigma_{t} \preceq \Sigma_{t}$ for all $t \geq 0$.
This gives a smaller or equal cost
\begin{align*}
\frac{1}{T+1} \sum_{t=0}^{T+1} \text{Tr} (L_{t} \tilde \Sigma_{t} L_{t}^T) \leq \frac{1}{T+1} \sum_{t=0}^{T+1} \text{Tr} (L_{t} \Sigma_{t} L_{t}^T),
\end{align*}
and so the lemma is proved.
\end{proof}

By Lemma \ref{lem: extra constraint}, we can add without loss of optimality the constraints 
\eqref{eq: extra constraint} to \eqref{eq: cost function}-\eqref{eq: SNRt}, which allows us
in the following to recast the problem as an SDP. 
Let $\alpha_i = \kappa_{\delta,\epsilon} \rho_i$, for all $1 \leq i \leq n$.  
Denote $E_{i}=\begin{bmatrix} 0& \ldots & I_{p_{i}} & \ldots & 0\end{bmatrix}^T$ 
the $p \times p_i$ matrix whose elements are zero except for an identity matrix in 
its $i^{\text{th}}$ block. The next lemma converts the constraints 
\eqref{eq: Q matrix construction}-\eqref{eq: extra constraint} to linear 
matrix inequalities. 

\begin{lem} 	\label{lem: Q matrix construction}
If $\Pi, D$ satisfy the constraints \eqref{eq: Q matrix construction}-\eqref{eq: extra constraint},
then $\Pi$ satisfies $V - V \Pi V \succeq 0$ together with the following constraints, 
for all $1 \leq i \leq n$,
\begin{align*}
& \begin{bmatrix} I_{p_{i}}/\alpha_{i}^{2} + V_i^{-1} & E_{i}^T\\
E_{i} & V-V \Pi V
\end{bmatrix}\succeq0, \\
& \textup{and not} \hspace{2mm} \begin{bmatrix} I_{p_{i}}/\alpha_{i}^{2} + V_i^{-1} & E_{i}^T\\
E_{i} & V-V \Pi V
\end{bmatrix}\succ0.
 \end{align*} 
Conversely, if $\Pi$ satisfies these constraints, then there exists a matrix $D$ 
such that $\Pi, D$ satisfy \eqref{eq: Q matrix construction}-\eqref{eq: extra constraint}. 
One such $D$ can be obtained by the factorization of
\begin{equation}	\label{eq: D factorization}
\kappa_{\delta,\epsilon}^2 \left[ \left( V - V \Pi V \right)^{-1} - V^{-1} \right] = D^T D
\end{equation}
(e.g., via singular value decomposition (SVD)) and will then satisfy $\Delta_2 D = 1$.
\end{lem}

\begin{proof}
$V - V \Pi V \succeq 0$ is immediate from \eqref{eq: SNRt alt}, since it is equal to 
$\left( V^{-1} + \frac{D^{T} D}{(\kappa_{\delta,\epsilon} \Delta_2 D)^2} \right)^{-1}$.
Together with \eqref{eq: extra constraint}, the right-hand side 
of \eqref{eq: Q matrix construction} then represents any positive 
semidefinite matrix $M = D^T D$ such that
its diagonal blocks $M_{ii} = D^T_i D_i$ have maximum eigenvalue
equal to $1/\rho_i^2$, since $\|D_i\|_2 = 1/\rho_i$ by \eqref{eq: extra constraint}.
These constraints are equivalent to saying that for all $1 \leq i \leq n$,
\begin{align}
& E_i^{T}
\left[ \left( V - V \Pi V \right)^{-1} - V^{-1} \right] E_i\preceq I_{p_{i}}/\alpha_i^2, \label{LemmaCondition1} \\
& \text{and not} \hspace{1mm} E_i^{T} \left[ \left( V - V \Pi V \right)^{-1} - V^{-1} \right] E_i 
\prec I_{p_{i}}/\alpha_i^2. \label{LemmaCondition2}
\end{align} 
Indeed, this comes from the standard fact that the maximum value $\lambda_{i, max}$ 
of $M_{ii}$ is the smallest $\lambda$ satisfying $M_{ii} \preceq \lambda I_{p_i}$. 
The constraints given in the Lemma are obtained by 
noting that $E_{i}^T V^{-1}E_{i} = V_i^{-1}$
and taking Schur complements in \eqref{LemmaCondition1} and \eqref{LemmaCondition2}.

Note that the fact that the left-hand side of \eqref{eq: Q matrix construction} is 
positive semidefinite is a simple consequence of $V \succeq V - V \Pi V$, hence adding
the constraint $\left( V - V \Pi V \right)^{-1} - V^{-1} \succeq 0$
is unnecessary.
\end{proof}

Next, define the information matrices $\Omega_t = \Sigma_t^{-1}$, for $0 \leq t \leq T$. 
If the matrices $W_{t}$ are invertible, denoting
$\Xi_t = W_{t}^{-1}$ and using the matrix inversion lemma in \eqref{eq: Sigma_t}, one gets
\begin{align} 	\label{eq: Sigma_t - rewrite}
&C_{t+1}^{T} \Pi C_{t+1} - \Omega_{t+1} + \Xi_{t} \nonumber \\
&\;\; - \Xi_{t} A_{t} (\Omega_{t} + A_{t}^{T} \Xi_{t} A_{t})^{-1} A_{t}^{T} \Xi_{t} = 0.
\end{align}

Replacing the equality in \eqref{eq: Sigma_t - rewrite} by $\succeq 0$ and taking a Schur complement,
together with  the inequalities of Lemma \ref{lem: Q matrix construction},
leads to the following SDP with variables 
$\Pi \succeq 0, \{X_{t} \succeq 0,\Omega_{t} \succ 0\}_{0 \leq t \leq T}$

\begin{subequations}
\begin{align}
& \min_{ \Pi \succeq 0, \{X_{t}, \Omega_t\}_{0 \leq t \leq T} } \quad \frac{1}{T+1} \sum_{t=0}^{T} \text{Tr} (X_t) \quad \text{s.t.} \label{eq: SDP cost function} \\
& \begin{bmatrix} X_{t} & L_t \\ L_t^{T} & \Omega_{t} \end{bmatrix} \succeq 0, \;\; 0 \leq t\leq T, \label{eq: SDP slack var} \\
& \Omega_{0} = \bar \Sigma_{0}^{-1} + C_{0}^{T} \Pi C_{0}, \label{eq: SDP constraint 0} \\
& \begin{bmatrix}
C_{t+1}^{T} \Pi C_{t+1} - \Omega_{t+1} + \Xi_{t}  & \Xi_{t} A_{t} \\
A_{t}^{T} \Xi_{t} & \Omega_{t} + A_{t}^{T} \Xi_{t} A_{t}
\end{bmatrix} \succeq 0, \nonumber \\
& 0 \leq t \leq T-1, \label{eq: SDP constraint t} \\
&\begin{bmatrix} I_{p_{i}}/\alpha_{i}^{2} + V_i^{-1} & E_{i}^T\\
E_{i} & V-V \Pi V
\end{bmatrix}\succeq0,\;\; 1 \leq i\leq n. 		\label{eq: SDP constraint Pi} 
\end{align}
\end{subequations}
Here the minimization of the cost \eqref{eq: cost function} has been replaced 
by the minimization of \eqref{eq: SDP cost function}, after introducing the 
slack variable $X_t$ satisfying \eqref{eq: SDP slack var}, or equivalently 
$X_t \succeq L_t \Omega_t^{-1} L_t^{T}$ by taking a Schur complement.
Since we replaced the equality in \eqref{eq: Sigma_t - rewrite} by an inequality, the SDP above
is a relaxation of the original problem \eqref{eq: cost function}-\eqref{eq: SNRt}.
The purpose of the next theorem is to show that this relaxation is tight.
Once an optimal solution for this SDP is obtained, we recover an optimal matrix $D$
from $\Pi$ by the factorization \eqref{eq: D factorization}.

\begin{thm} \label{thm:FinalTheoremFiltering}
Let $\Pi^* \succeq 0$,$\{ X^*_{t} \succeq 0,\Omega^*_{t} \succ 0\}_{0 \leq t \leq T}$ 
be an optimal solution for \eqref{eq: SDP cost function}-\eqref{eq: SDP constraint Pi}.
Suppose that for some $0 \leq t \leq T$, we have $L_t(\Omega^*_{t})^{-1}C_t^T \neq 0$.
Let $D^*$ be a matrix obtained from $\Pi^*$ by the factorization \eqref{eq: D factorization}.
Then $D^*$ is an optimal solution for \eqref{eq: cost function}-\eqref{eq: SNRt}, which moreover
satisfies $\|D_i^*\|_2 = 1/\rho_i$ for $1 \leq i \leq n$, with the decomposition \eqref{eq: D decomposition}.
The corresponding optimal covariance matrices $\{\Sigma_t^*\}_{0 \leq t \leq T}$ for the
Kalman filter can be computed using the equations \eqref{eq: Sigma_0}-\eqref{eq: SNRt}.
Finally, the optimal costs of \eqref{eq: cost function}-\eqref{eq: SNRt} and 
\eqref{eq: SDP cost function}-\eqref{eq: SDP constraint Pi} are equal, i.e., the
SDP relaxation is tight.
\end{thm}
 
\begin{rem}
Even though the condition $L_t(\Omega^*_{t})^{-1}C_t^T \neq 0$ introduced to guarantee 
the possibility of constructing the matrix $D$ in the proof is not an explicit condition expressed 
directly in term of the problem parameters, it appears to be a weak requirement in practice. 
\end{rem}

\begin{proof}
Consider $\Pi^{*}$, $\{X_{t}^{*},\Omega_{t}^{*}\}_{0 \leq t \leq T}$ an optimal solution 
of the SDP \eqref{eq: SDP cost function}-\eqref{eq: SDP constraint Pi}. 
As explained in the proof of Lemma \eqref{lem: Q matrix construction}, the constraint \eqref{eq: SDP constraint Pi}
is equivalent to
\begin{align*}
\alpha_i^2 E_i^{T} \left[ \left( V - V \Pi^* V \right)^{-1} - V^{-1} \right] E_i\preceq I_{p_{i}}.
\end{align*}
We show that we cannot have 
\[
\alpha_i^2 E_i^{T} \left[ \left( V - V \Pi^* V \right)^{-1} - V^{-1} \right] E_i \prec I_{p_{i}}.
\] 
Indeed, otherwise there exists $\eta > 0$
such that the matrix $\tilde \Pi = \Pi^* + \eta I_{p}$ still satisfies \eqref{eq: SDP constraint Pi}.
Using this matrix $\tilde \Pi$ in \eqref{eq: SDP constraint 0}, we obtain a matrix
$\tilde \Omega_0 = \Omega_0^* + \eta C_0^T C_0$ feasible for \eqref{eq: SDP constraint 0}.
Now define $\tilde \Omega_1 = \Omega_1^* + \eta C_1^T C_1$. One can immediately check that 
$\tilde \Pi, \tilde \Omega_0$ and $\tilde \Omega_1$ satisfy \eqref{eq: SDP constraint t}
for $t = 0$, using the fact that $\Omega_0^*, \Omega_1^*, \Pi^*$ are feasible and that
$C_0^T C_0 \succeq 0$. Similarly the matrices $\tilde \Omega_{t} = \Omega_t^* + \eta C_t^T C_t$
are feasible in \eqref{eq: SDP constraint t} for all $0 \leq t \leq T$.
Now in \eqref{eq: SDP slack var}, taking a Schur complement, we obtain that the
matrices $\tilde X_t = L_t \tilde \Omega_t^{-1} L_t^T$ are feasible. By the matrix
inversion lemma we can write
\[
\tilde X_t = X^*_t - L_t (\Omega^*_t)^{-1} C_t^{T} K_t C_t (\Omega^*_t)^{-1} L_t^T.
\]
for some matrices $K_t \succ 0$.
These matrices $\tilde X_t$ give a cost 
$\frac{1}{T+1} \sum_{t=0}^T \text{Tr}(X^*_t) - \|L_t (\Omega^*_t)^{-1} C_t^{T} K_t^{1/2}\|_F$,
which is a strict improvement over the assumed optimal solution as soon as one matrix 
$L_t (\Omega^*_t)^{-1} C_t^{T}$ is not zero (since the $K_t$'s are invertible).
Hence, we have a contradiction and so we cannot have 
$\alpha_i^2 E_i^{T} \left[ \left( V - V \Pi^* V \right)^{-1} - V^{-1} \right] E_i \prec I_{p_{i}}$. 
We can then apply Lemma \ref{lem: Q matrix construction} and construct a matrix $D^*$ from $\Pi^*$ 
as in \eqref{eq: D factorization}, so that the pair $\Pi^*$, $D^*$ satisfies 
\eqref{eq: Q matrix construction}-\eqref{eq: extra constraint}.

Let $\mathcal V^*$ be the optimum value of \eqref{eq: SDP cost function}-\eqref{eq: SDP constraint Pi},
and $V^*$ that of \eqref{eq: cost function}-\eqref{eq: SNRt}. First, $\mathcal V^* \leq V^*$ since the
constraints of the original problem have been relaxed to obtain the SDP.
We now show how to construct a sequence $\{ \Sigma_t^* \}_{0 \leq t \leq T}$, which together
with $\Pi^*$ satisfy the constraints of \eqref{eq: cost function}-\eqref{eq: SNRt}
and achieve the cost $\mathcal V^*$, thereby proving the remaining claims of the theorem.
Note that since $\Omega^*_{t} + A_{t}^{T} \Xi_{t} A_{t} \succ 0$, \eqref{eq: SDP constraint t} 
is equivalent to $\mathcal R_{t}(\Omega^*_{t}, \Omega^*_{t+1}) \succeq 0$, where
\begin{align*}
\mathcal R_{t}(\Omega_t, \Omega_{t+1}) :=& \, C_{t+1}^{T} \Pi^* C_{t+1} - \Omega_{t+1} + \Xi_{t} \\
&- \Xi_{t} A_{t} (\Omega_t + A_{t}^{T} \Xi_{t} A_{t})^{-1} A_{t}^{T} \Xi_{t}.
\end{align*}

First, we take $\Sigma_0^* = (\Omega_0^*)^{-1}$. 
If $\mathcal R_{t}(\Omega^*_{t}, \Omega^*_{t+1}) = 0$ for all $0 \leq t \leq T-1$,
then the matrices $\Omega^*_t$ satisfy \eqref{eq: Sigma_t - rewrite}
and we can take $\Sigma_t^* = (\Omega_t^*)^{-1}$ for all $t$, since these
matrices satisfy the equivalent condition \eqref{eq: Sigma_t}.
Otherwise, let $\tilde t$ be the first time index such that 
$\mathcal R_{\tilde t}(\Omega^*_{\tilde t}, \Omega^*_{\tilde t+1})$ is not zero.
For $t \leq \tilde t$, we take $\Sigma_t^* = (\Omega_t^*)^{-1}$ and so in particular
we have $\mathcal R_{\tilde t}((\Sigma_{\tilde t}^*)^{-1}, \Omega^*_{\tilde t+1}) \succeq 0$
and not zero.
Consider the matrix 
$\tilde \Omega_{\tilde t+1} = \Omega^*_{\tilde t+1} + \mathcal R_{\tilde t}(\Omega^*_{\tilde t}, \Omega^*_{\tilde t+1})$, 
which then satisfies $\mathcal R_{\tilde t}(\Omega^*_{\tilde t}, \tilde \Omega_{\tilde t+1}) = 0$ 
by definition. We set $\Sigma_{\tilde t+1}^* = \tilde \Omega_{\tilde t+1}^{-1}$.

Now note that we again have 
$\mathcal R_{\tilde t+1}((\Sigma_{\tilde t+1}^*)^{-1}, \tilde \Omega_{\tilde t+2}) \succeq 0$,
by verifying that \eqref{eq: SDP constraint t} is satisfied at $t+1$, using the fact that
$(\Sigma_{\tilde t+1}^*)^{-1} = \tilde \Omega_{\tilde t+1} \succeq \Omega^*_{\tilde t+1}$.
From here, we can proceed by induction, assuming that $\Sigma_{0}^*$, \ldots, $\Sigma_{t}^*$ 
are set and taking
\begin{equation}	\label{eq: Sigma construction}
\Sigma_{t+1}^* = (\tilde \Omega_{t+1})^{-1} := (\Omega_{t+1}^*+\mathcal R_{t}((\Sigma_{t}^*)^{-1}, \Omega^*_{t+1}))^{-1},
\end{equation}
which reduces to $(\Omega_{t+1}^*)^{-1}$ if $\mathcal R_{t}((\Sigma_{t}^*)^{-1}, \Omega^*_{t+1}) = 0$.

The procedure above provides matrices $\Pi^{*}$, $\{\Sigma_t^* \}_{0 \leq t \leq T}$ 
satisfying the constraints of the original program \eqref{eq: cost function}-\eqref{eq: SNRt}.
By construction, we have $(\Sigma_t^*)^{-1} \succeq \Omega_t^*$ and the matrices $(\Sigma_t^*)^{-1}$ 
also satisfy \eqref{eq: SDP constraint t}. Therefore, replacing, for each $0 \leq t \leq T$, $\Omega_t^*$ by 
$(\Sigma^*_t)^{-1}$ and $X_t^*$ by $L_t \Sigma_t L_t^T$ in the solution 
of \eqref{eq: SDP cost function}-\eqref{eq: SDP constraint Pi} that we started with gives a cost
$\mathcal V \leq \mathcal V^*$ for the SDP, hence $\mathcal V = \mathcal V^*$ by optimality of $\mathcal V^*$.
But this cost $\mathcal V$ is also equal to the cost $\frac{1}{T+1} \sum_{t=0}^{T} \text{Tr} (L_{t} \Sigma_{t}^* L_{t}^T)$ 
of \eqref{eq: cost function}. Hence, we have shown that
\eqref{eq: cost function}-\eqref{eq: SNRt} and
\eqref{eq: SDP cost function}-\eqref{eq: SDP constraint Pi} have
the same minimum value, and constructed an optimal solution $D^*$, $\Pi^*$, $\{\Sigma_t^* \}_{0 \leq t \leq T}$
to \eqref{eq: cost function}-\eqref{eq: SNRt} achieving this value.
\end{proof}

\subsection{Stationary problem} \label{sec:Stationary problems}

In the stationary case with $T \to \infty$ and the model 
\eqref{eq:Global}-\eqref{eq:Global measurements} now assumed time-invariant 
and detectable, we wish to find a signal aggregation matrix $D$ followed 
by a time-invariant Kalman filter to minimize the steady-state MSE $E_\infty$.
This can be done by solving the following SDP with variables $\Pi \succeq 0$, 
$X \succeq 0$, $\Omega \succ 0$
\begin{subequations}
\begin{align}
& \min_{\Pi \succeq 0, X, \Omega} \quad   \text{Tr} (X) \label{eq: SDP cost function-Invariant-1} \quad \text{s.t.} \\
& \begin{bmatrix} X & L \\ L^{T} & \Omega \end{bmatrix} \succeq 0, \\
& \begin{bmatrix}
C^{T} \Pi C - \Omega + \Xi  & \Xi A \\
A^{T} \Xi & \Omega + A^{T} \Xi A
\end{bmatrix} \succeq 0, \\
&\begin{bmatrix}I_{p_{i}}/\alpha_i^2+ V_i^{-1} & E_{i}^T\\
E_{i} & V-V \Pi V
\end{bmatrix}\succeq0,\;\; 1 \leq i\leq n. \label{eq: SDP cost function-Invariant-5}	
\end{align}
\end{subequations}

Compared to \eqref{eq: SDP cost function}-\eqref{eq: SDP constraint Pi}, this SDP is of much smaller
size, due to the fact that the transient behavior is neglected in the performance measure.
The proof of the following theorem is similar to that of Theorem \ref{thm:FinalTheoremFiltering}.
\begin{thm} \label{thm:FinalTheoremFiltering - steady state}
Let $\Pi^* \succeq 0$, $X^* \succeq 0$, $\Omega^* \succ 0$ be an optimal solution 
for \eqref{eq: SDP cost function-Invariant-1}-\eqref{eq: SDP cost function-Invariant-5}.
Suppose that we have $L (\Omega^*)^{-1} C^T \neq 0$.
Let $D^*$ be a matrix obtained from $\Pi^*$ by the factorization \eqref{eq: D factorization}.
Then $D^*$ minimizes the steady-state MSE $E_\infty$ among all possible matrices $D$ 
introduced as in Figure \ref{fig:improvedArchitecture}, and the corresponding value of $E_\infty$
is equal to the optimal value of the SDP.
\end{thm}

Note that given the optimum matrix $D^*$ and corresponding $\Pi^*$, an alternative way of 
computing $E_\infty$ is by solving an ARE to obtain the steady-state prediction error 
covariance matrix $\bar \Sigma_\infty$ for $\hat x^-$, then compute the steady-state error 
covariance matrix $\Sigma_\infty = (\bar \Sigma_\infty^{-1} + C^T \Pi^* C)^{-1}$ for 
the estimator $\hat x$, and finally $E_\infty = \Tr(L \Sigma_\infty L^T)$.


\subsection{Syndromic Surveillance Example} \label{sec:Examples}

To illustrate the differentially private filtering methodology, including issues related
to the choice of model and adjacency relation \eqref{eq:adjacency}, we discuss in this
section an example motivated by the analysis of epidemiological data.
Consider a scenario in which Public Health Services (PHS) must publish for a population
infected by a disease the number $I_t$ of infectious people, i.e., those who have the disease 
and are able to infect others. 
PHS use privacy-sensitive data collected from $n=12$ hospitals, 
with each hospital $i$ recording the number $I_{i,t}$ of infectious people in its area,
as well as the number $R_{i,t}$ of recovered people, i.e, those who were infected
by the disease and are now immune.
For each area $i$, these numbers are assumed to follow 
a discrete-time SEIR epidemiological model for the specific disease \cite{Dukic2012,DegueIJC2019},
written here first without process noise,
obtained by discretizing a classical continuous-time model \cite{Hethcote2000} using 
a forward Euler discretization \cite{HU2012}
\begin{align}
\label{eq:modelSEIR}
\begin{split}
S_{i,t+1} & =S_{i,t}-\beta_i S_{i,t}I_{i,t}/N_i, \\
E_{i,t+1} & =(1-\tau_i)E_{i,t}+\beta_i S_{i,t}I_{i,t}/N_i,\\
I_{i,t+1} & =(1-\vartheta_i)I_{i,t}+\tau_i E_{i,t}, \\
R_{i,t+1} & =R_{i,t}+\vartheta_i I_{i,t}, 
\end{split}
\end{align}
where $S_i$ represents the number of susceptible people, i.e., those who are not infected 
but could become infected, the number of exposed people, i.e., those infected but not yet 
able to infect others, and $N_i$ the total number of people.
The parameters $\tau_i$, $\beta_i$ and $\vartheta_i$ represent the transition rates from 
one disease stage to the next. 

For each hospital $i$'s area, $N_i$ in \eqref{eq:modelSEIR} is assumed constant  
for the time interval of interest.
Moreover, let us make an approximation that this period  
is short enough or the disease at an early-enough stage so that
$S_{i,t}$ can also be assumed approximately constant, equal to $S_{i,0}$.
Then, the remaining states 
$\eta_{i,t}=[E_{i,t}, I_{i,t}, R_{i,t}]^T \in \mathbb{R}^{3}$
evolve as a linear system of the form
\begin{align}   \label{eq: preliminary linear model}
\eta_{i,t+1} &= \mathcal{A}_{i} \, \eta_{i,t}+ \varphi_{i,t}, \;\; 0 \leq t \leq T-1,
\end{align}
where 
\begin{align}   
\mathcal{A}_{i}= & \begin{bmatrix}1-\tau_i & \beta_i S_{i,0}/N_i & 0\\
\tau_i & 1-\vartheta_i & 0 \\
0 & \vartheta_i & 1
\end{bmatrix},
\end{align}
and we introduced the white Gaussian process noise $\varphi_{i,t} \sim \mathcal{N}(0,\Phi_{i})$
in the model, with covariance matrices $\Phi_{i} \succ 0$, for $1 \leq i \leq 12$. 
Now, consider the problem of choosing the level $\rho_i$ in the adjacency
relation \eqref{eq:adjacency} to provide a meaningful privacy guarantee to the
patients. If we assume that the measurement for hospital $i$ in our model is 
$y_{i,t} = [I_{i,t}, R_{i,t}]^T$, then once a person becomes sick, he or she is 
counted at each period either in the signal $I_{i}$ or, after recovery, in the 
signal $R_i$. As a result, the impact of a single individual on the measurement $y_i$ 
could be quite large, proportional to the time horizon $T$, requiring in turn a large 
value of $\rho_i$ to provide strong privacy guarantees, and hence a high level
of noise.

A practical solution to this issue is to not continuously record the same individuals, 
but simply count at each time period $t$ the number of newly infectious individuals,
i.e., $y_{i,t}^{(1)} = I_{i,t}-I_{i,t-1}$, as well as the number of newly recovered individuals,
$y_{i,t}^{(2)} = R_{i,t}-R_{i,t-1}$. Such a measurement model is much more beneficial from
a privacy point of view. Indeed, assuming that a given individual can only become
sick once, he or she can affect $y_{i,t}^{(1)}$ by $\pm 1$ for at most $2$ periods $t$
(when the person becomes sick and recovers), and $y_{i,t}^{(2)}$ for at most one period by $1$.
As a result, one can take $\rho_i = \sqrt{3}$ in \eqref{eq:adjacency} to provide a strong
privacy guarantee, i.e., insensitivity of the published output to the complete record of 
a single individual. The measurement noise $v_{i,t}$ in the model represents counting
errors, e.g., due to people not being diagnosed by the hospital.

To obtain a dynamic model compatible with the measurements $y_{i,t} = [y_{i,t}^{(1)}, y_{i,t}^{(2)}]^T$,
define for each hospital $i$ the $4$-dimensional state 
$x_{i,t} = [I_{i,t-1}, R_{i,t} - R_{i,t-1}, E_{i,t}, I_{i,t}]^T$.
These states $x_{i,t}$ evolve as \eqref{eq:IndividualGeneral} with  
\begin{align}   \label{eq: delayed SEIR model}
A_{i}= & \begin{bmatrix}0 & 0& 0 & 1 \\
0 & 0 & 0 & \vartheta_i \\
0 & 0 & 1-\tau_i & \beta_i S_{i,0}/N_i \\
0 & 0 & \tau_i & 1-\vartheta_i
\end{bmatrix}, \; 
B_{i}= \begin{bmatrix}0\\0 \\0 \\0\end{bmatrix}
\end{align}
and the noise $w_{i,t} = [\varpi_{i,t}, \varphi_{1,i,t}, \varphi_{2,i,t}, \varphi_{3,i,t}]^T$
includes the components of $\varphi_i$ introduced in \eqref{eq: preliminary linear model}
as well as a small independent Gaussian noise $\varpi_i$ with small variance $\sigma^2$, 
added so that the covariance matrices of $W_i = \text{diag}(\sigma^2,\Phi_i)$ are invertible 
as required by our algorithms (ideally $\varpi_{i,t}$ would be $0$ since the first line
of $A_i$ corresponds to the delay in the model).
The measurement matrices are immediately
\begin{align*}
C_{i}=\begin{bmatrix}-1 & 0 & 0 & 1\\
0 & 1 & 0 & 0
\end{bmatrix}, \; 1 \leq i \leq 12,
\end{align*}
and one can then verify that the model is observable.
Let us assume the following values for the parameters in \eqref{eq: delayed SEIR model}
\begin{align*}
&\tau_{i}=0.2, \; \beta_i S_{i,0}/N_i = 0.5, \; \vartheta_i=0.1, \text{ for } \; 1 \leq i \leq 3 \\
&\tau_{i}=0.3, \; \beta_i S_{i,0}/N_i = 0.3, \; \vartheta_i=0.5, \text{ for } \; 4 \leq i \leq 6 \\
&\tau_{i}=0.5, \; \beta_i S_{i,0}/N_i = 0.7, \; \vartheta_i=0.15,  \text{ for } \; 7 \leq i \leq 9 \\
&\tau_{i}=0.7, \; \beta_i S_{i,0}/N_i = 0.6, \; \vartheta_i=0.3, \text{ for } \; 10 \leq i \leq 12.
\end{align*}
Moreover, assume for all $1 \leq i \leq 12$
\begin{align*}
V_{i}=0.4 \, I_{2}, & \, 
\Phi_i = \begin{bmatrix}0.3 & -0.15 & 0\\
-0.15 & 0.3 & -0.15 \\
0 & -0.15 & 0.3
\end{bmatrix}.
\end{align*}
The goal of 
the surveillance system is to continuously release,
at each period $t$, an estimate of the quantity 
\begin{align*}
z_{t}= & \sum_{i=1}^{n} \left( \begin{bmatrix}0 & 0 & 0 & 1\end{bmatrix}\times x_{i,t} \right).
\end{align*}

\begin{figure}
\begin{centering}
\includegraphics[width=8cm]{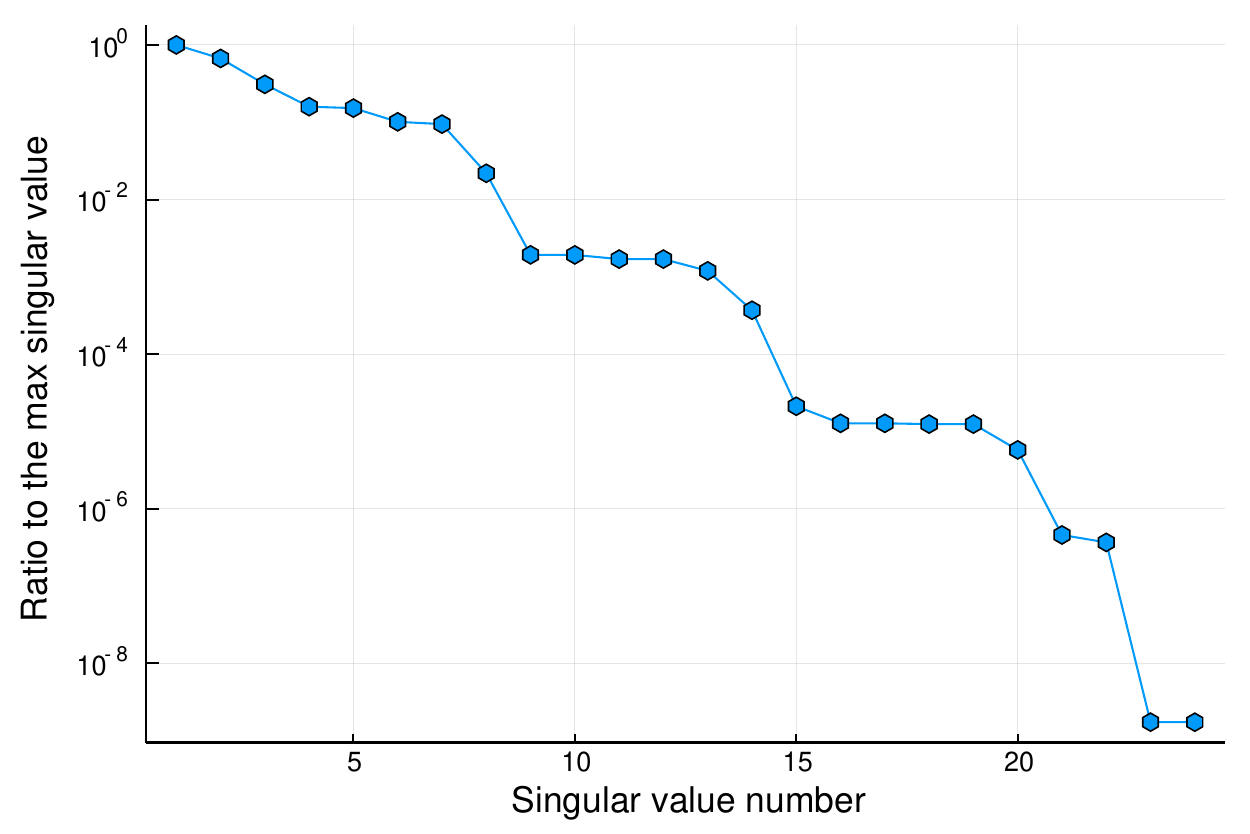}
\par\end{centering}
\caption{Ratio $\sigma_i/\sigma_{\max}$ for the singular values of $M^*$ (on a
logarithmic scale).}
\label{fig:singularvaluesD} 
\end{figure}

Let us set the privacy parameters to $\delta=0.02$ and $\epsilon=\ln(3)$ for example
(and $\rho_i$ to $\sqrt{3}$ as discussed above).
We design the two-stage architecture of Figure \ref{fig:improvedArchitecture} by first 
solving the stationary optimization 
problem \eqref{eq: SDP cost function-Invariant-1}-\eqref{eq: SDP cost function-Invariant-5},
which provides an optimal matrix $\Pi^*$.
Recall that the matrix $D$ can be then obtained from the factorization \eqref{eq: D factorization}.
The number of rows $q$ of $D$ is then equal to the rank of the matrix
$M^* := \kappa_{\delta,\epsilon}^2 \left[ \left( V - V \Pi^* V \right)^{-1} - V^{-1} \right]$.
Due to the numerical procedures, this rank will typically be maximal 
(here equal to $24$). However, we plot on Fig. \ref{fig:singularvaluesD} the 
ratios $\sigma_i/\sigma_{\max}$ of the singular values of $M^*$, with $\sigma_{\max}$
the maximum singular value. We see that if we select for example only the singular
values $\sigma_i \geq 10^{-4} \sigma_{\max}$ in the SVD of $M^*$ and set the smaller
ones manually to $0$, we obtain a matrix of rank $14$, hence a matrix $D$ with $14$ rows
instead of $24$. We then verify (by solving an algebraic Riccati equation) that the 
performance of the steady-state Kalman filter is left virtually unchanged by this 
truncation, with a steady-state MSE of about $160$, i.e., a root mean square error (RMSE) 
of $12.65$ for the estimate of the number $I_t$ of infectious people. 
In contrast, the input perturbation mechanism (i.e., taking $D = I$) gives a steady-state 
MSE of $777$ or RMSE on $I_t$ of $27.87$. 
Reducing the number of rows of $D$ is also beneficial for example in terms 
of processing complexity of the Kalman filter, which now has fewer inputs.

\begin{figure}
\begin{centering}
\includegraphics[width=8cm]{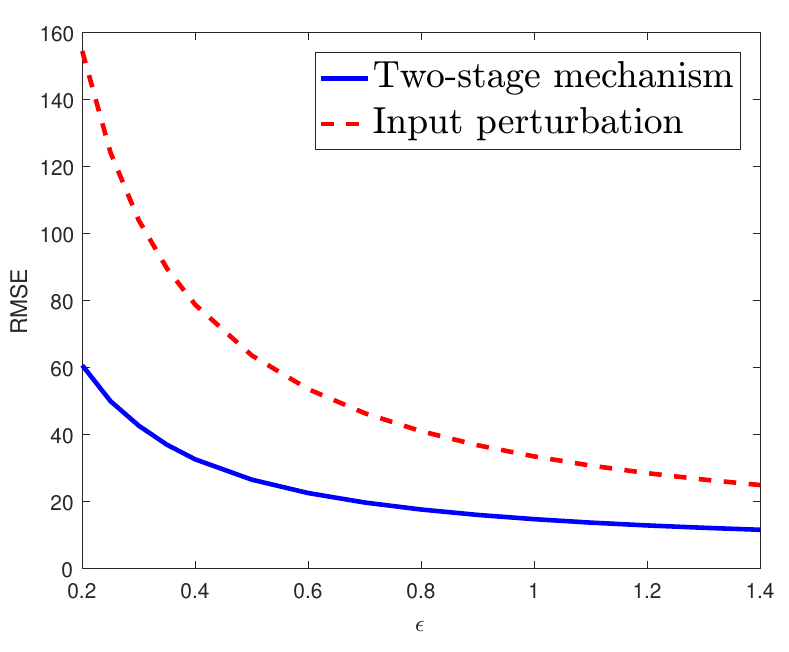}
\par\end{centering}
\caption{\label{fig:RMSE} Steady-state RMSE of the total infectious 
population estimate for the two-stage and input perturbation mechanisms, 
as a function of the privacy parameter $\epsilon$ (here $\delta = 0.01$).
}
\end{figure}

For $\delta=0.01$, we compare on Fig. \ref{fig:RMSE} the steady-state RMSE 
of the two-stage mechanism and the input perturbation architecture for 
different values of the privacy parameter $\epsilon$. One can see 
that by aggregating the input signals, we obtain a much better performance, 
especially in the high-privacy regime (when $\epsilon$ is small).
Other measures of performance could also be of interest for the final filtering
architecture, such as the convergence time of the estimates. 
For illustration purposes, Fig. \ref{fig:Kalman filter} shows sample paths 
of differentially private estimates both for the two-stage mechanism and 
for the input perturbation mechanism.

\begin{figure}
\centering
\includegraphics[width=\linewidth]{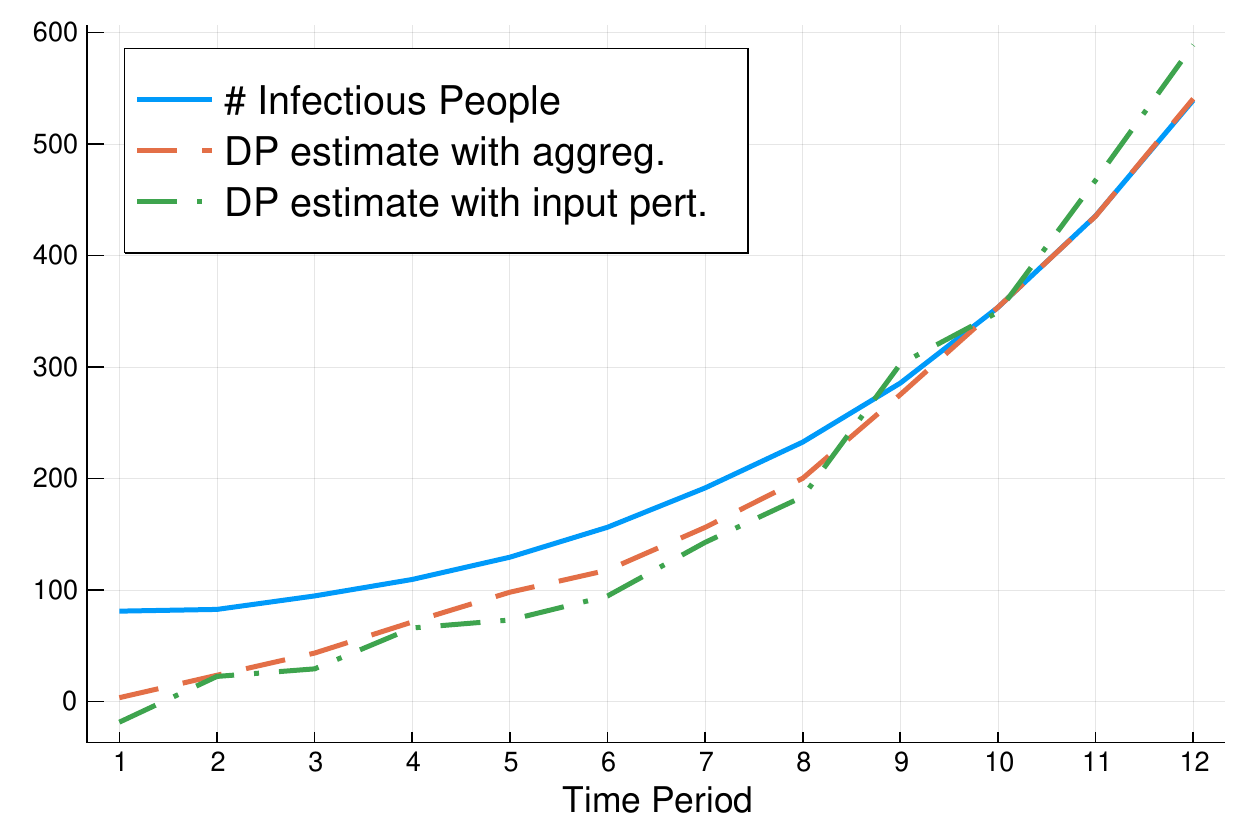}
\caption{Sample paths of differentially private estimates 
(with $\delta=0.01, \; \epsilon=\ln(3)$) of the total infectious population, 
for the two-stage and input perturbation mechanisms. Both estimates 
are initialized with $\bar x_0 = 0$ and we show here the values after 
the measurement update step of the Kalman filter.}
\label{fig:Kalman filter}
\end{figure}


\section{Differentially Private LQG Control}
\label{sec:ProblemFormulationControl}

\begin{figure}
\centering
\includegraphics[width=\linewidth]{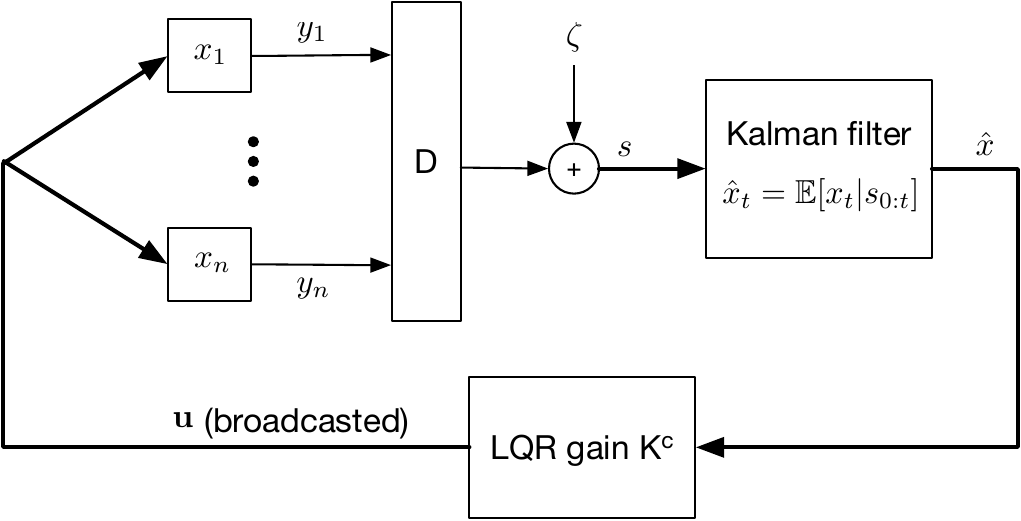}
\caption{Differentially private LQG control architecture with first-stage signal aggregation.}
\label{fig: DP LQG control - aggregation}
\end{figure}

We now turn to the LQG control problem introduced at the end of 
Section \ref{section: estimation and control problems}. For 
concreteness, we assume here that the control input $u_t$ at time $t$ 
can depend on the measurements $y_{0:t}$ up to time $t$. 
It is straightforward to adapt the discussion to the case where 
only $y_{0:t-1}$ are available to compute $u_t$. By the separation 
principle \cite[Chapter 8]{Astrom:book2012:stochasticControl} for
the standard LQG control problem (i.e., with no privacy constraint), 
the optimal control law for the system \eqref{eq:Global}-\eqref{eq:Global measurements} 
and quadratic cost \eqref{eq: cost function LQG} is of the form 
$u_t = K^c_t \hat x_t$, where: i) $\hat x_t = \mathbb E[x_t | y_{0:t}]$ is 
the MMSE estimator, computed by the Kalman filter \eqref{eq: KF} independently 
of the design of the optimal control law; and ii) $K^c_t$ is the optimal 
gain for the deterministic linear quadratic regulator (LQR) problem, i.e., 
assuming that $w = 0$ in \eqref{eq:Global} and $C = I_n$, $v = 0$ 
in \eqref{eq:Global measurements}. 
In particular, since the sequence of control gains $K^c_t$ can be precomputed, 
the LQG control problem is similar to the filtering problem considered in the 
previous section, with the desired published output $\hat z_t = L_t \hat x_t$ 
simply replaced by $u_t = K^c_t \hat x_t$. This motivates the architecture proposed 
on Figure \ref{fig: DP LQG control - aggregation} for differentially private 
LQG control, which, compared to Figure \ref{fig:control setup}, aggregates the 
measured signals $y_i$ before adding the privacy-preserving noise. 
Essentially, the only difference with the Kalman filtering problem is that the 
performance is measured by \eqref{eq: cost function LQG} instead of the 
MSE \eqref{eq: MSE performance}, so that the cost function in the optimization 
problem for the matrix $D$ needs to be changed.
The following theorem summarizes the discussion above and the classical results (see for example
\cite[Chapter 8]{Astrom:book2012:stochasticControl}) that allow us to formulate in the following 
an efficiently solvable optimization problem for the choice of aggregation matrix $D$ on 
Figure \ref{fig: DP LQG control - aggregation}. 

\begin{thm}
Given a choice of matrix $D$ for the differentially private LQG control architecture of 
Figure \ref{fig: DP LQG control - aggregation}, the control law $u_t(y_{0:t})$, $t \geq 0$, 
minimizing the cost function \eqref{eq: cost function LQG} takes the form
\[
u_t = K^c_t \hat x_t,
\]
where $\hat x_t$ is computed by the Kalman filter \eqref{eq: KF} and the gains 
$K^c_t$  are precomputed independently of the filtering problem as 
\[
K^c_t := - (R_t + B_t^T P_{t+1} B_t)^{-1} B_t^T P_{t+1} A_t
\]
with the matrices $P_t \succeq 0$ given by $P_T = Q_T$ and the backward Riccati 
difference equation
\begin{align}	
P_t = &A_t^T P_{t+1} A_t + Q_t \nonumber \\
&- A_t P_{t+1} B_t (R_t + B_t^T P_{t+1} B_t)^{-1} B_t^T P_{t+1} A_t.  \label{eq: Riccati control}
\end{align}
Moreover, the optimal objective \eqref{eq: cost function LQG} corresponding to this 
control law can be written
\begin{equation}	\label{eq: total LQG cost}
J_T = J^c_T + J^f_T(D)
\end{equation}
where 
\[
J^c_T = \frac{1}{T+1} \left( \bar x_0^T P_0 \bar x_0 + \Tr(P_0 \bar \Sigma_0) 
+ \sum_{t=1}^T \Tr(P_t W_t) \right)
\]
is a term independent of $D$ and
\begin{align}
J_T^f(D) &= \frac{1}{T+1} \sum_{t=0}^{T-1} \Tr(N_t \Sigma_t), \;\; \text{ with } 
\label{eq: cost to optimize LQG} \\
N_t &= Q_t + A_t^T P_{t+1} A_t - P_t, \;\; 0 \leq t \leq T-1,  \label{eq: def. of N}
\end{align}
and the matrices $\Sigma_t$ for $0 \leq t \leq T-1$ defined by \eqref{eq: Sigma_0}-\eqref{eq: SNRt}.
\end{thm}

Note that the dependence on $D$ in \eqref{eq: cost to optimize LQG} is due to the fact 
that the error covariance matrices 
$\Sigma_t$ depend on $D$ via \eqref{eq: Sigma_0}-\eqref{eq: SNRt}.
Moreover, from \eqref{eq: Riccati control} we see that 
$N_t$ defined in \eqref{eq: def. of N} is positive semidefinite. 
Hence, we can define for all $0 \leq t \leq T-1$ matrices $L_t$ such that $N_t = L_t^T L_t$, 
and $L_T = 0$, to rewrite the cost \eqref{eq: cost to optimize LQG} as $\frac{1}{T+1} \sum_{t=0}^{T} \Tr(L_t \Sigma_t L_t^T)$.
Minimizing this cost over the matrices $D$ with the relations 
\eqref{eq: Sigma_0}-\eqref{eq: SNRt} leads to an optimal 
aggregation matrix $D$ for the architecture of Figure  \ref{fig: DP LQG control - aggregation}. 
The reformulation of this optimization problem as an SDP then follows exactly from the 
same argument as in Section \ref{sec:SubsectionSDPFiltering},
which led to Theorem \ref{thm:FinalTheoremFiltering}. In other words, we have 
the following result.

\begin{prop}	\label{prop: LQG result}
Let $L_t$, $0 \leq t \leq T-1,$ be any matrices obtained from the factorization
\[
L_t^T L_t = N_t, 0 \leq t \leq T-1,
\]
with $N_t$ defined by \eqref{eq: def. of N}, and let $L_T = 0$.
Let $\Pi^* \succeq 0$,$\{ X^*_{t} \succeq 0,\Omega^*_{t} \succ 0\}_{0 \leq t \leq T}$ 
be an optimal solution for \eqref{eq: SDP cost function}-\eqref{eq: SDP constraint Pi}
with this choice of matrices $L_t$.
Suppose that for some $0 \leq t \leq T$, we have $L_t(\Omega^*_{t})^{-1}C_t^T \neq 0$.
Let $D^*$ be a matrix obtained from $\Pi^*$ by the factorization \eqref{eq: D factorization}.
Then $D^*$ minimizes the LQG cost \eqref{eq: cost function LQG} among all the aggregation
matrices $D$ of Figure \ref{fig: DP LQG control - aggregation}. This cost is equal to
$J_T^c + \frac{1}{T+1} \sum_{t=0}^T \Tr(X_t^*)$, with $J_T^c$ defined in \eqref{eq: total LQG cost}.
\end{prop}

\subsection{Stationary Problem} \label{sec:StationaryControl problems}

As in Section \ref{sec:Stationary problems} for the filtering problem, we can consider 
the steady-state LQG problem by letting $T \to \infty$ and assuming the 
model \eqref{eq:Global}-\eqref{eq:Global measurements} and the weight matrices $Q$ and $R$ 
in the cost \eqref{eq: cost function LQG} to be time-invariant.
We assume the model to be detectable and stabilizable and the pair
$(A,Q^{1/2})$ to be detectable, in order to be able to implement a stabilizing
LQG controller.
We can take the optimal gains $K^c$ and $K^f$ of the controller and 
the Kalman filter respectively to be also independent of time. 
Following Theorem \ref{thm:FinalTheoremFiltering - steady state} 
and Proposition \ref{prop: LQG result}, we then immediately have the 
following result for the design of the optimal $D$ matrix.

\begin{prop}	\label{prop: LQG result asymptotic}
Let $P$ be the positive semidefinite solution of the following algebraic Riccati equation
\begin{align*}
P = & A^T P A + Q - A^T P B(R+B^T P B)^{-1} B^T PA. 
\end{align*}
Let $L$ be any matrix obtained from the factorization
\[
L^T L = A^T P A + Q  - P.
\]
Let $\Pi^* \succeq 0$, $X^* \succeq 0$, $\Omega^* \succ 0$ be an optimal solution 
for \eqref{eq: SDP cost function-Invariant-1}-\eqref{eq: SDP cost function-Invariant-5},
for this choice of matrix $L$. Suppose that we have $L (\Omega^*)^{-1} C^T \neq 0$.
Let $D^*$ be a matrix obtained from $\Pi^*$ by the factorization \eqref{eq: D factorization}.
Then $D^*$ minimizes the steady-state LQG cost $J_\infty$ 
among all possible matrices $D$ introduced as in Figure \ref{fig: DP LQG control - aggregation},
the corresponding value of the cost is $J_\infty = \Tr(PW) + \Tr(X^*)$.
\end{prop}

\subsection{Numerical Simulations} 
\label{sec:ControlNumericalexperiments} 

We illustrate the above results numerically for $n = 10$ independent scalar systems, 
with states $x_{i,t}$ evolving as first order systems with time-invariant
dynamics \eqref{eq:IndividualGeneral}, where
\begin{align*}
A_1 &= 1.1, A_2 = 0.85, A_3 = 0.84, A_4 = 0.7, A_5 = 0.75, \\
A_6 &= 0.9, A_7 = 0.8, A_8 = 1.05, A_9 = 0.99, A_{10} = 1,
\end{align*}
$C_i = 1$, $W_i = 0.02$ and $V_i = 0.1$ for all $1 \leq i \leq 10$,
and $B$ is a $10 \times 3$ matrix with $B_{ij} = 0$ except for
\begin{align*}
B_{3,1} &= B_{6,1} = B_{9,1} = 1 \\
B_{1,2} &= B_{4,2} = B_{7,2} = B_{10,2} = 1 \\
B_{2,3} &= B_{5,3} = B_{8,3} = 1.
\end{align*}
In other words, the published control signal is
$3$-dimensional, with control input $u_1$ simultaneously actuating
systems $3, 6, 9$, $u_2$ actuating systems $1, 4, 7, 10$ and
$u_3$ actuating systems $2, 5$ and $8$.
We wish to regulate the sum of the states $\sum_{i=1}^{10} x_i$ to $0$, 
hence we take 
$Q$ to be the $10 \times 10$ all-ones matrix 
and $R = I_3$ in \eqref{eq: cost function LQG}. 
We set the privacy parameters to $\delta=0.05$ and $\epsilon=\ln(3)$,
and $\rho_{i}=1$ for $1 \leq i \leq 10$.

To design the differentially private LQG controller with signal aggregation
for the stationary problem, we compute the matrix $L$ of 
Proposition \ref{prop: LQG result asymptotic} and solve the optimization 
problem \eqref{eq: SDP cost function-Invariant-1}-\eqref{eq: SDP cost function-Invariant-5}.
Following the methodology discussed at the end of Section \ref{sec:Examples},
we find that one can take the matrix $D$ to be a $4 \times 10$ matrix
at the matrix factorization stage \eqref{eq: D factorization}.
The corresponding steady-state cost $J_\infty$ is found to be $1.37$,
whereas it is $2.17$ for the input perturbation mechanism (i.e., with
$D = I_{10}$). Hence, signal aggregation results in a significant
improvement.
Figure \ref{fig:quadraticCosts} shows a comparison of the
cost $J_\infty$ for this problem, with the two architectures, 
for different values of $\epsilon$. 
Finally, Figure \ref{fig:LQG sample paths} illustrates the sample
paths obtained under closed-loop control with the differentially
private controllers. 
We see in particular on Figure \ref{fig:2 stage vs input sample path}
that the two-stage architecture provides a much better transient
behavior for the regulated average trajectory (or sum of trajectories)
compared to the input perturbation architecture, in addition to a better
steady-state performance.

\begin{figure}
\centering
\includegraphics[width=\linewidth]{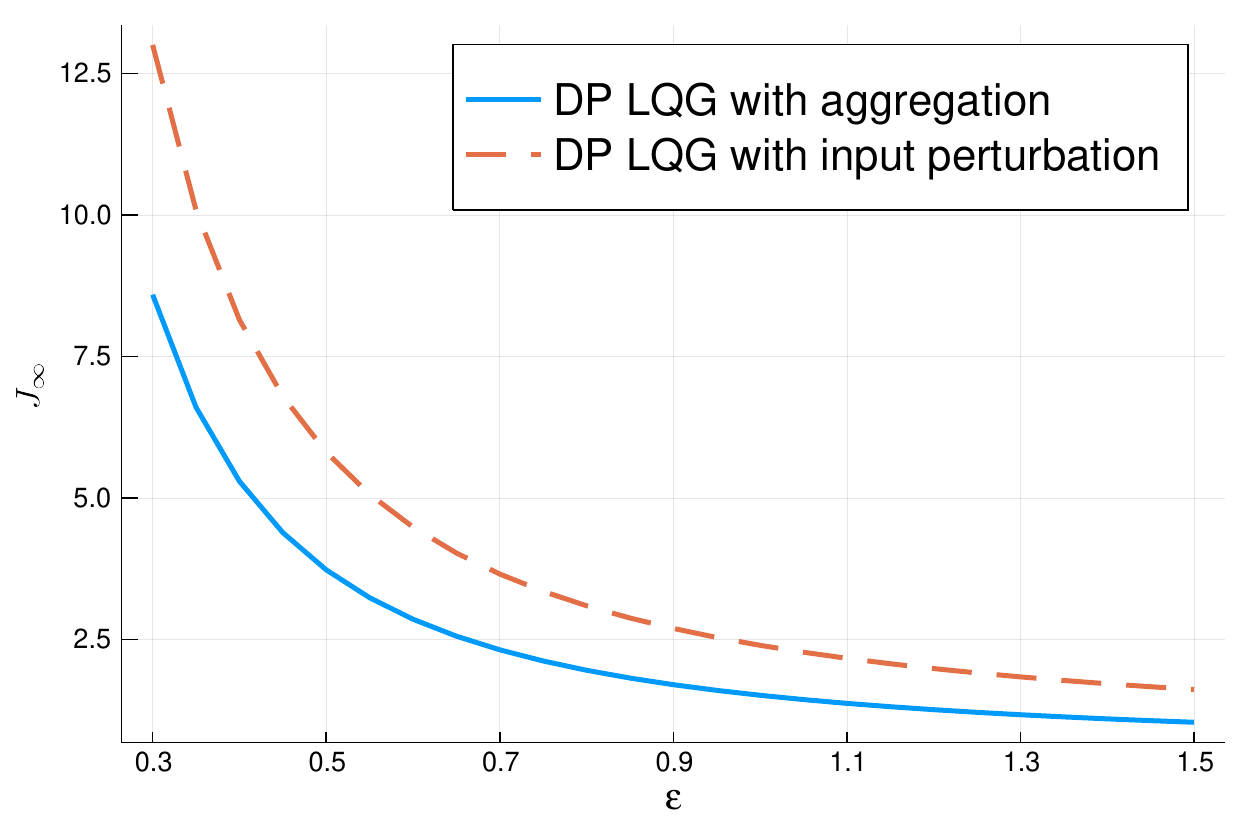}
\caption{Optimal steady-state quadratic cost $J_\infty$ 
as a function of the privacy parameter $\epsilon$,
for the architecture with signal aggregation and for
the input perturbation mechanism. Here $\delta = 0.05$.}
\label{fig:quadraticCosts}
\end{figure}

\begin{figure}  
    \centering
    \begin{subfigure}[b]{\columnwidth}
        \includegraphics[width=\textwidth]{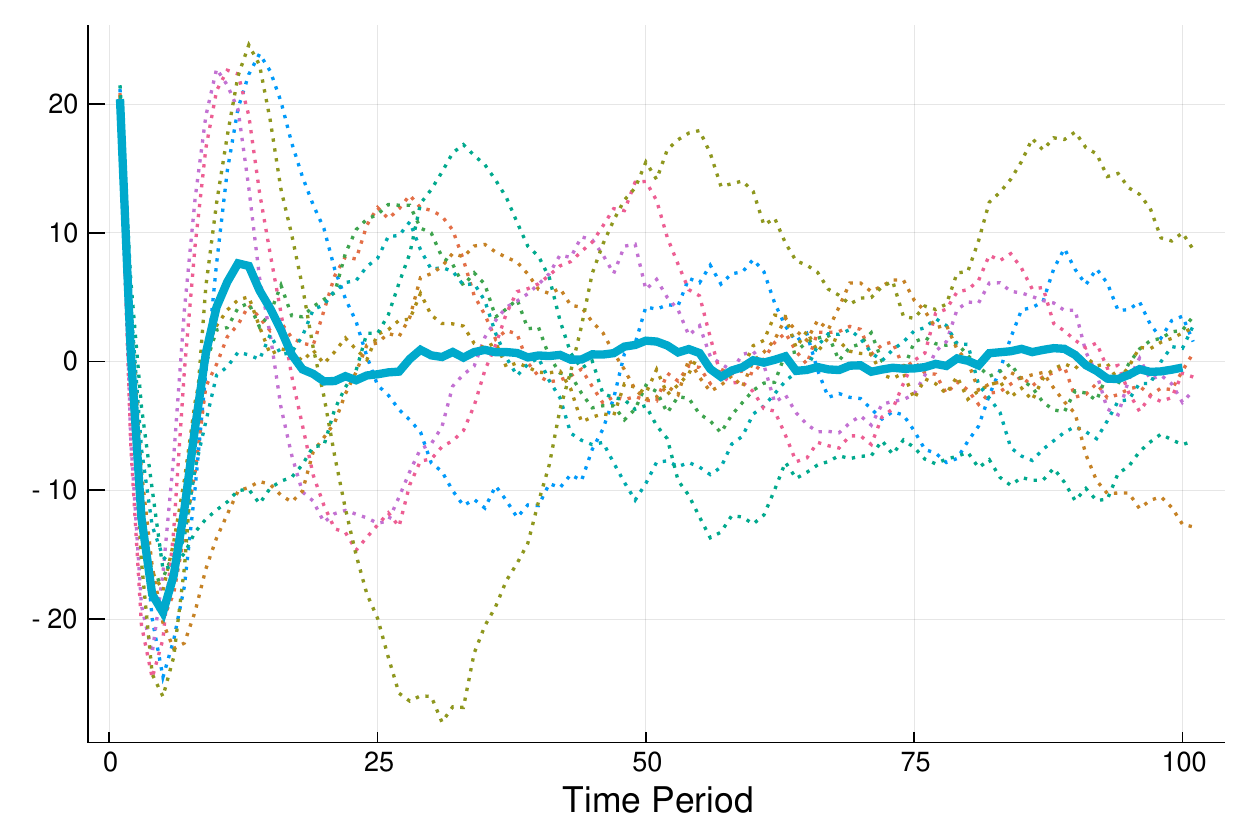}
        \caption{Sample paths for the individual trajectories and the
regulated average trajectory ($\frac{1}{10} \sum_{i=1}^{10} x_{i,t}$, thick solid line) 
using the LQG controller with signal aggregation.}
        \label{fig:sample traj. 2 stage}
    \end{subfigure}
    ~ 
    \begin{subfigure}[b]{\columnwidth}
        \includegraphics[width=\textwidth]{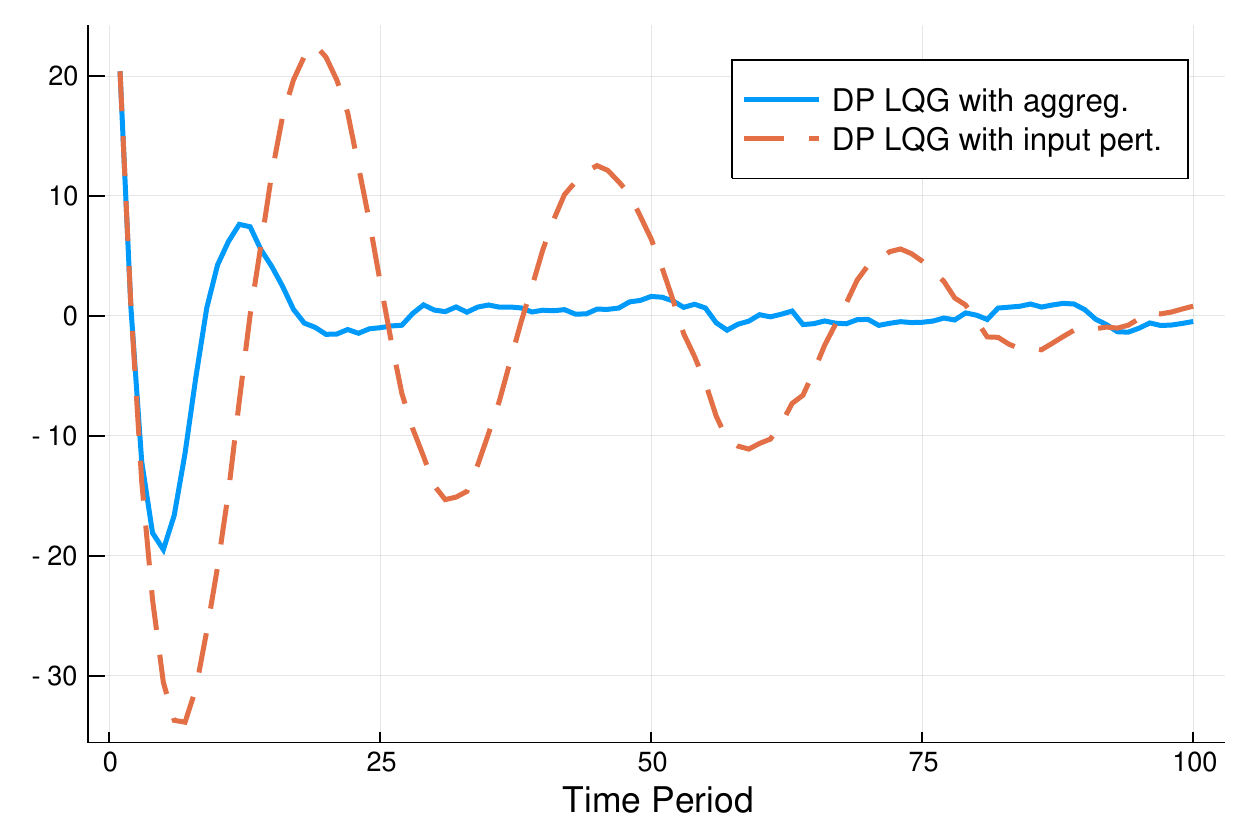}
        \caption{Sample paths of the regulated average trajectories for
the LQG controllers with input perturbation and signal aggregation.}
        \label{fig:2 stage vs input sample path}
    \end{subfigure}
    \caption{Individual and average trajectories under differentially private
    closed-loop control. Here $\epsilon = \ln(3), \delta = 0.05$.
    All individual system states start in the neighborhood of $20$, which is
    also the value used to initialize the Kalman filter estimates for all states.
    }	\label{fig:LQG sample paths}
\end{figure}

\section{Conclusion} \label{sec:Conclusion}

This paper considers the Kalman filtering and LQG optimal control problems
under a differential privacy constraint. We propose  an architecture combining 
an input stage aggregating the individual signals appropriately, the Gaussian 
mechanism to enforce differential privacy and a Kalman filter to reconstruct
the desired estimate. Optimizing the parameters of this architecture can be recast
as an SDP. Examples illustrate the performance improvements 
compared to the input perturbation mechanism, which adds noise directly on the individual
signals. The methodology is then adapted to propose a similar two-stage architecture
for an LQG control problem, where the goal is to compute a shared control broadcasted
to the agent population.
Future research could consider the extension of these ideas to nonlinear systems, 
improving on the input and output perturbation mechanisms
of \cite{LeNy:IJRNC18:dpContraction}.
In addition, since the size of the SDP increases rapidly with the number of agents
(and the time horizon in the non-stationary case), it would be useful to develop 
numerical methods and a problem-specific solver that take advantage of the sparsity 
of the matrices involved in the constraints, as in \cite{Benson2000} for example.

\bibliographystyle{IEEEtran}

\bibliography{diffprivFilter}

\end{document}